\title{Long Alternating Paths Exist}
\author{Wolfgang Mulzer}{Institut f\"ur Informatik, 
Freie Universit\"at Berlin, Takustra\ss{}e 9, 14195 Berlin, Germany}
{mulzer@inf.fu-berlin.de}{https://orcid.org/0000-0002-1948-5840}{Supported 
in part by ERC StG 757609.}
\author{Pavel Valtr}{Department of Applied Mathematics, Faculty of Mathematics and Physics, 
Charles University, Prague, Czech Republic}
{valtr@kam.mff.cuni.cz}{}{}
\authorrunning{W. Mulzer and P. Valtr}
\keywords{Non-crossing path, bichromatic point sets}
\newcommand{\N}{\mathbb{N}}
\newcommand{\R}{\mathbb{R}}
\newcommand{\ignore}[1]{}
\newcommand{\eps}{\varepsilon}
\begin{document}
\maketitle

\begin{abstract}
Let $P$ be a set of $2n$ points in convex
position, such that $n$ points are colored red 
and $n$ points
are colored blue. A non-crossing alternating path 
on $P$ of length $\ell$ is a sequence $p_1, \dots,
p_\ell$ of $\ell$ points from $P$ so that 
(i) all points are pairwise distinct; 
(ii) any two consecutive points $p_i, p_{i+1}$ have
different colors; and 
(iii) any two segments $p_i p_{i+1}$ and 
$p_j p_{j+1}$ have disjoint relative interiors, 
for $i \neq j$.

We show that there is an absolute constant
$\varepsilon > 0$, independent of $n$ and of 
the coloring, such that $P$ always admits a 
non-crossing alternating path of length at 
least $(1 + \eps)n$. The result is obtained 
through a slightly stronger statement: 
there always exists a non-crossing bichromatic
separated matching on at least $(1 + \eps)n$ 
points of $P$. This is a properly colored 
matching whose segments are pairwise disjoint 
and intersected by common line. 
For both versions, this is the 
first improvement of the easily obtained lower 
bound of $n$ by an additive term linear in $n$. 
The best known published upper bounds are 
asymptotically of order $4n/3+o(n)$.
\end{abstract}

\section{Introduction}
\label{sec:introduction}

We study a family of problems that were 
discovered independently in two different (but
essentially equivalent) settings. 
Researchers in discrete
and computational geometry found a geometric 
formulation, while researchers in 
computational biology and stringology
studied circular words. 
Around 1989, 
Erd{\H o}s asked the following 
geometric question~\cite[p.~409]{brasspach}: 
given a set $P$ of $n$ red and $n$ blue points in 
convex position, how many points of $P$ can always be
collected by a non-intersecting polygonal path $\pi$ 
with vertices in $P$ such that the vertex-color
along $\pi$ alternates between red and blue. 
Taking every other segment of $\pi$, we
obtain a properly colored set of pairwise
disjoint segments with endpoints in $P$. A 
closely related problem asks for a large
\emph{separated matching}, a collection 
of such segments with the extra property that all 
of them are intersected by a 
common line. This is equivalent to 
finding a long \emph{antipalindromic subsequence}
in a circular sequence of $2n$ bits,
where $n$ bits are $0$ 
and $n$ bits are $1$, see Figure~\ref{fig:intro}.
This formulation was
stated in 1999 in a paper on protein 
folding~\cite{Lyngso-Pedersen}.
Similar questions were also studied for 
\emph{palindromic}
subsequences~\cite{MullnerR19}. One such
question is 
equivalent to finding many disjoint
monochromatic segments with endpoints in $P$, a problem that was also studied by 
the geometry
community.

\begin{figure}
    \centering
    \includegraphics{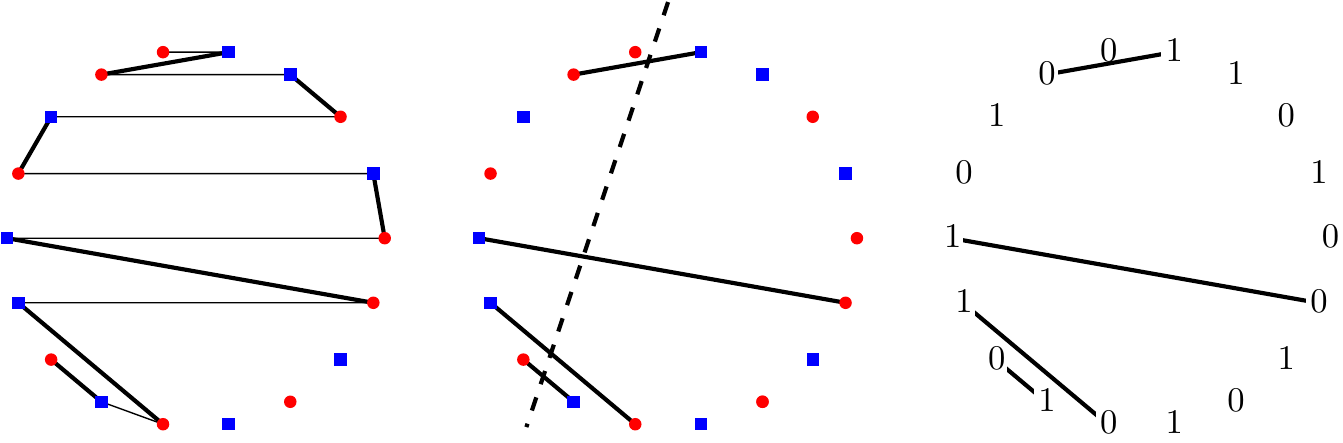}
    \caption{Left: a set $P$ of $18$ points in convex
    position, $9$ of them red and $9$ of them blue,
    with an alternating path of length $15$
    (this is not a longest such path). Taking every
    other segment, we obtain a properly colored
    disjoint matching on $P$. Middle: a separated
    matching on $P$, with a dashed
    line that intersects all matching edges (this
    is not a maximum such matching). Right: 
    an antipalindromic subsequence on a circular
    word of $18$ bits, $9$ of them $0$ and $9$ of
    them $1$.}
    \label{fig:intro}
\end{figure}
An easy lower bound for alternating paths 
is $n$, and the best known lower bound is 
$n + \Omega(\sqrt n)$~\cite{ViolaPhD}. 
We increase this to $cn + o(n)$, 
for a constant $c > 1$. Similarly,
for the other mentioned
problems, we improve the lower bounds
by an additive term of $\eps n$,
for some fixed $\eps > 0$. Also here,
this constitutes the first $\Omega(n)$
improvement over the trivial
lower bounds.

\subsection{The (geometric) setting}
We have a set $P$ of $2n$ points 
$p_0, p_1, \dots, p_{2n - 1}$ in convex
position, numbered in clockwise order. 
The points in $P$ are colored red
and blue, so that  
there are exactly $n$ red points and $n$ 
blue points.
The goal is to find a long \emph{non-crossing 
alternating path} in $P$. That is, a sequence 
$\pi: q_0, q_1, \dots, q_{\ell - 1}$ of points in $P$ such 
that (i) each point from $P$ appears at most once in $\pi$; 
(ii) $\pi$ is \emph{alternating}, i.e., for
$i = 0, \dots, \ell - 2$, we have that
$q_i$ is red and $q_{i+1}$ is
blue or that $q_i$ is blue and
$q_{i + 1}$ is red;
(iii) $\pi$ is \emph{non-crossing}, i.e., for 
$i, j \in \{ 0, \dots, \ell - 2 \}$,
$i \neq j$, the
two segments $q_iq_{i + 1}$ and $q_jq_{j + 1}$
intersect only in their endpoints and only if they are consecutive in $\pi$, see
Figure~\ref{fig:intro}(left).
We will also just say \emph{alternating path} for
$\pi$.
Alternating paths for planar 
point sets in general (not just convex) position 
have been studied in various previous papers, e.g.,~\cite{Abellanas03,AichholzerEtal19,AkiyamaU90,Cibulka2013,Claverol13}.

For most of this work, we will focus on another,
closely related, structure.
A \emph{non-crossing separated bichromatic matching} $M$ in $P$ 
is a set $\{p_1q_1, p_2q_2, \dots, p_k q_k\}$ of $k$ pairs
of points in $P$, such that (i) all points 
$p_1, \dots, p_k, q_1, \dots, q_k$
are pairwise distinct; (ii) the segments $p_iq_i$ and $p_jq_j$ are
disjoint, for all $1 \leq i < j \leq k$; (iii) for $i = 1, \dots, k$,
the points $p_i$ and $q_i$ have different colors; and 
(iv) there exists a line that intersects all
segments $p_1q_1, p_2q_2, \dots, p_kq_k$,
see Figure~\ref{fig:intro}(middle).
Often, we will just use the term
\emph{separated bichromatic matching} or simply \emph{separated matching} for $M$.

\subsection{Previous results}
The following basic lemma says that a large separated
matching immediately yields a long alternating
path. The (very simple) proof was given by
Kyn\v cl, Pach, and T\'oth~\cite[Section 3]{KynclPaTo08}.
\begin{lemma}
\label{lem:matching-alternating}
Suppose that a bichromatic convex 
point set $P$ admits a separated 
matching with $k$ segments.
Then, $P$ has an alternating path of length
$2k$.
\end{lemma}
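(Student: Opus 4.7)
The plan is to proceed by induction on $k$, after first fixing a useful combinatorial order on the matched points. Let $\ell$ be a line witnessing that $M$ is separated; after a slight perturbation I may assume $\ell$ crosses the interior of each segment and avoids $P$. Orient the plane so that $\ell$ is horizontal. Then every $s_i$ has one endpoint $u_i$ above $\ell$ and one endpoint $d_i$ below. Sort the segments so that $s_1, \ldots, s_k$ meet $\ell$ from left to right. A short argument using pairwise non-crossingness and the convex position of $P$ then shows that $u_1, u_2, \ldots, u_k$ appear in this order along the upper arc of the hull and $d_1, d_2, \ldots, d_k$ in this order along the lower arc. Hence, in the CCW cyclic order on $P$, the matched endpoints appear as $u_1, u_2, \ldots, u_k, d_k, d_{k-1}, \ldots, d_1$, possibly interleaved with unmatched points.

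I will prove the following strengthening by induction on $k$: among the $2k$ matched points there is a non-crossing alternating Hamiltonian path ending at $u_k$, and also one ending at $d_k$. The base case $k=1$ is immediate, since $s_1$ is bichromatic and may be traversed in either direction. For the inductive step, remove $s_k$ from $M$; the remaining $k-1$ segments still form a separated matching (witnessed by $\ell$) on the remaining $2(k-1)$ matched points, which are in convex position. By induction there are non-crossing alternating Hamiltonian paths $P'_u$ and $P'_d$ on these points, ending at $u_{k-1}$ and $d_{k-1}$ respectively. I extend one of them by a single connector edge followed by $s_k$. If $u_{k-1}$ and $u_k$ have the same color (equivalently, $d_{k-1}$ and $d_k$ do), I extend $P'_u$ by the connector $u_{k-1}d_k$ and $P'_d$ by $d_{k-1}u_k$, producing paths ending at $u_k$ and $d_k$. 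If instead their colors differ, I use the connectors $u_{k-1}u_k$ and $d_{k-1}d_k$, again producing paths ending at both $u_k$ and $d_k$. In every case the chosen connector joins vertices of opposite colors, so the extension is alternating.

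The main subtlety is non-crossingness. In the cyclic order on $P$, the short arcs between $\{u_{k-1}, d_{k-1}\}$ and $\{u_k, d_k\}$ contain only unmatched points of $P$, together with at most one of $u_k, d_k$ themselves, neither of which appears in $P'$. Consequently no edge of $P'$ can interleave with any of the four candidate connectors, so those connectors do not cross any edge of $P'$. Each connector shares an endpoint with $s_k$, so it does not cross $s_k$ either; and $s_k$ itself does not cross any edge of $P'$ by the same interleaving argument, since the short arc from $u_k$ to $d_k$ again contains no matched point of $P'$. The main obstacle in the argument is to set up the right inductive hypothesis: the natural statement, asserting only the existence of some non-crossing alternating Hamiltonian path, is too weak to propagate through the case analysis, because one cannot in general control from which side of $s_{k-1}$ the path reaches the new segment. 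Keeping both endpoint options available at every step of the induction is what makes the argument close.
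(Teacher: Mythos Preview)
Your argument is correct. The paper itself does not supply a proof of this lemma; it simply cites Kyn\v cl, Pach, and T\'oth~\cite[Section~3]{KynclPaTo08} for the ``very simple'' argument. What you wrote is essentially that argument, packaged as an induction.

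For comparison: the standard direct version builds the path in one sweep rather than inductively. With your notation, one takes $\pi_{2i-1}\pi_{2i}=s_i$ for each $i$ (in some orientation), and chooses the orientation of $s_{i+1}$ so that $\pi_{2i}$ and $\pi_{2i+1}$ have different colors; this is always possible because $u_{i+1}$ and $d_{i+1}$ are oppositely colored. Non-crossingness follows from exactly the interleaving observation you used: in the cyclic order $u_1,\dots,u_k,d_k,\dots,d_1$ the connector between $s_i$ and $s_{i+1}$ and the edge $s_j$ (or the connector between $s_j$ and $s_{j+1}$) never interleave when $|i-j|\ge 2$, and otherwise share an endpoint. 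Your strengthened inductive hypothesis (keeping both possible terminal vertices $u_k$ and $d_k$ available) is precisely what encodes the freedom to orient $s_k$ either way in this greedy description, so the two write-ups are the same proof.

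A couple of small polishing points, not errors: the sentence about ``short arcs between $\{u_{k-1},d_{k-1}\}$ and $\{u_k,d_k\}$'' could be stated more crisply as ``$u_k$ and $d_k$ are consecutive among the matched points in the cyclic order, and so are $u_{k-1},u_k$ and $d_{k-1},d_k$''; and the claim that $u_1,\dots,u_k$ appear in this order along the upper arc deserves one line (if $u_i$ and $u_j$ were out of order for $i<j$, then $s_i$ and $s_j$ would have to cross above $\ell$, since their intersections with $\ell$ are in order).
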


Let $l(n)$ be the largest number such that 
for every set $P$ of $n$ red and $n$ blue
points in convex position, there is an
alternating path of length at least $l(n)$.
Around 1989, Erd{\H o}s and
others~\cite{KynclPaTo08} conjectured 
that $\lim_{n \rightarrow \infty} l(n)/n = 3/2$.
Abellanas,  Garc\'ia,  Hurtado, and Tejel~\cite{Abellanas03} and, independently,
Kyn\v cl, Pach, and T\'oth~\cite[Section 3]{KynclPaTo08} disproved this  
by showing the upper bound 
$l(n) \leq 4n/3 + O(\sqrt{n})$. 
Kyn\v cl, Pach, and T\'oth~\cite{KynclPaTo08}
also improved the (almost trivial) lower 
bound $l(n) \geq n$ to 
$l(n) \geq n + \Omega(\sqrt{n/\log n})$. 
They conjectured that in fact
$l(n) = 4n/3 + o(n)$. 
In her PhD thesis~\cite{ViolaPhD} (see 
also~\cite{hajnal-viola,Viola11a,Viola11b}), 
M\'esz\'aros improved the lower bound
to 
$l(n) \geq n + \Omega(\sqrt{n})$, and
she described a wide class of 
configurations where every separated 
matching has at most $2n/3 + O(\sqrt{n})$
edges. This also implies the upper bound
$l(n) \leq 4n/3 + O(\sqrt{n})$ mentioned
above~\cite{Abellanas03,KynclPaTo08}.
It was announced to us in personal
communication that E.~Cs\'oka, Z.~Bl\'azsik,
Z.~Kir\'aly, and D.~Lenger constructed
configurations with an upper bound 
of $cn + o(n)$ on the size of the largest
separated matching, where 
$ c =2 - \sqrt{2} \approx 0.5858$.

\subsection{Our results}
We improve the almost trivial lower bound
$n/2$ for separated matchings
to $n/2 + \eps n$.

\begin{theorem}
\label{thm:main}
There is a fixed $\eps > 0$ such that any convex point set 
$P$ with $n$ red and $n$ blue points 
admits
a separated matching with at least
$n/2 + \eps n$ edges.
\end{theorem}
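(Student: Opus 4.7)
The plan is to fix a separating line $\ell$, read off the colors along the two chains of the convex polygon, and reduce the separated matching problem to a problem on binary strings. Writing $U_\ell$ and $D_\ell$ for the left-to-right color sequences along the upper and lower chains of $\ell$, and $\bar D_\ell$ for $D_\ell$ with the two colors swapped, a non-crossing bichromatic separated matching crossing $\ell$ is precisely a common subsequence of $U_\ell$ and $\bar D_\ell$: non-crossing forces matched pairs to appear in a common left-to-right order, while the bichromatic condition $U_\ell[i]\neq D_\ell[j]$ becomes $U_\ell[i]=\bar D_\ell[j]$. Hence the largest separated matching with separator $\ell$ has size $\mu(\ell)=\mathrm{LCS}(U_\ell,\bar D_\ell)$, and the goal is to exhibit some $\ell$ with $\mu(\ell)\geq n/2+\eps n$.

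I would then restrict attention to lines that split the cyclic sequence of $2n$ points into two arcs of length $n$. For such an $\ell$, if $U_\ell$ contains $r_\ell$ reds and $n-r_\ell$ blues, then $\bar D_\ell$ contains $r_\ell$ reds and $n-r_\ell$ blues as well, so the two strings have identical letter multisets. Matching all occurrences of the majority letter in order gives the trivial bound $\mu(\ell)\geq\max(r_\ell,n-r_\ell)$. Thus, if some balanced split satisfies $|r_\ell-n/2|\geq 2\eps n$, the theorem is immediate, and from here on I may assume that \emph{every} arc of $n$ consecutive boundary points is nearly color-balanced.

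Under this equidistribution assumption, comparing two neighbouring $n$-arcs forces $\Omega(n)$ color alternations around the cyclic boundary, so the strings $U_\ell$ and $D_\ell$ associated with any split are also rich in transitions. This puts us close to the pseudo-random regime, where Chv\'atal--Sankoff type bounds predict $\mathrm{LCS}(U_\ell,\bar D_\ell)=\Theta(n)$ with a constant comfortably above $1/2$. The plan for a quantitative proof is not to invoke randomness but to construct the common subsequences explicitly: start from the trivial matching of majority letters (size about $n/2$), and use each bichromatic alternation as an opportunity to insert an \emph{additional} matched pair of the minority color, showing by an averaging argument over the $O(n)$ balanced splits that at least one of them admits $\eps n$ such insertions simultaneously. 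The main obstacle is exactly this combinatorial step: turning $\Omega(n)$ forced alternations into $\eps n$ extra order-respecting matches on top of the trivial $n/2$. I expect a further case distinction is needed inside the equidistributed case, separating the sub-case in which the alternations are spread evenly along the boundary (the truly pseudo-random regime) from the sub-case in which they concentrate in a short sub-arc, where a different line direction can exploit the concentration locally.
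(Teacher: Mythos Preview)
Your LCS reformulation is correct and is essentially the ``antipalindromic subsequence'' formulation the paper mentions, and the first case (some balanced split has $|r_\ell-n/2|\ge 2\eps n$) is fine. The gap is the next step: the claim that ``comparing two neighbouring $n$-arcs forces $\Omega(n)$ color alternations'' is false. Take the $4$-run coloring $R^{n/2}B^{n/2}R^{n/2}B^{n/2}$. Here $p_i$ and $p_{i+n}$ always have the same color, so every arc of $n$ consecutive points contains exactly $n/2$ reds; equidistribution is perfect, yet there are only four runs. That specific coloring happens to admit a size-$n$ separated matching, but your inference from equidistribution to ``rich in transitions'' collapses, and with it the pseudo-random heuristic. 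More generally, Chv\'atal--Sankoff constants concern \emph{random} strings; there is no bound of the form $\mathrm{LCS}(U,\bar D)\ge (1/2+\eps)n$ for arbitrary binary strings with the same letter multiset and many transitions, so even if you could force many runs this would not close the argument. The last paragraph (``averaging over the $O(n)$ balanced splits'', ``further case distinction'') is where all the difficulty of the problem lives, and nothing concrete is proposed there.

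For contrast, the paper's proof does not fix a single line and analyse strings. It works at all scales simultaneously via $k$-chunk partitions: if the $(k,0)$-partition is very mixed for small constant $k$ there are $\Omega(n)$ runs and an averaging over $2n$ parallel matchings wins; if it is very pure for $k=\Theta(n)$ a Kyn\v cl--Pach--T\'oth doubling argument wins; otherwise an interpolation through $(k,\lambda)$-partitions finds a scale where the ``index'' is $\approx 0.1$, and the heart of the proof is a random \emph{chunk}-matching whose expected size is exactly $n/2$, followed by a case analysis (variance of chunk indices, structure of middle sub-chunks) that extracts an $\eps n$ surplus over the expectation. The multiscale chunk machinery is precisely what replaces the missing ``equidistribution $\Rightarrow$ structure'' step in your outline.
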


By Lemma~\ref{lem:matching-alternating}, we
obtain the following corollary
about long alternating paths.

\begin{theorem}
\label{thm:main-alternating}
There is a fixed $\eps > 0$ such that
any convex point set $P$ with $n$ red and $n$ blue points 
admits
an alternating path with at least
$n + \eps n$ vertices.
\end{theorem}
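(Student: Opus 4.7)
The plan is to derive Theorem~\ref{thm:main-alternating} as an immediate corollary of Theorem~\ref{thm:main} and Lemma~\ref{lem:matching-alternating}, both of which are already in hand at this point in the paper. Concretely, given a bichromatic convex point set $P$ with $n$ red and $n$ blue points, I would first invoke Theorem~\ref{thm:main} to produce a non-crossing separated bichromatic matching $M$ on $P$ with at least $n/2 + \eps n$ edges, for the absolute constant $\eps > 0$ supplied by that theorem.

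Next, I would feed $M$ into Lemma~\ref{lem:matching-alternating}. Since $M$ is a separated matching on $P$ with $k \ge n/2 + \eps n$ edges, the lemma guarantees a non-crossing alternating path on $P$ of length (in the ``number of vertices'' convention used in Section~\ref{sec:introduction}) at least $2k \ge n + 2\eps n$. Renaming the constant (say, $\eps' := 2\eps$, which is still a fixed positive absolute constant) gives the claimed bound of $n + \eps' n$ vertices and completes the argument.

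There is essentially no obstacle in this derivation itself: the reduction from separated matchings to alternating paths has already been packaged into Lemma~\ref{lem:matching-alternating} (due to Kyn\v cl, Pach, and T\'oth), so all the real work sits in proving Theorem~\ref{thm:main}. The only points to be careful about are the bookkeeping of the constant (tracking that the factor of $2$ between edges of the matching and vertices of the path improves, rather than worsens, the linear gain over $n$) and confirming that the color-alternation and non-crossing properties produced by the lemma match the definition of alternating path stated in the introduction, both of which are immediate.
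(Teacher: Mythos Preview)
Your proposal is correct and matches the paper's own derivation exactly: the paper also obtains Theorem~\ref{thm:main-alternating} as an immediate corollary of Theorem~\ref{thm:main} via Lemma~\ref{lem:matching-alternating}, with no additional argument beyond the one-line reduction you describe.
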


A variant of Theorem~\ref{thm:main} 
also holds for the
\emph{monochromatic} case.
The definition of a \emph{non-crossing
separated monochromatic matching}, or simply
\emph{separated monochromatic matching},
is obtained from the definition of a 
separated bichromatic matching 
by changing
 condition (iii) to 
(iii') for $i = 1, \dots, k$,
the points $p_i$ and $q_i$ have the same
color.
Some of the upper bound constructions for
separated bichromatic matchings apply to
the monochromatic setting, also 
giving the upper bound $2n/3+O(\sqrt{n})$.
Here is a monochromatic version of
Theorem~\ref{thm:main}.

\begin{theorem}
\label{thm:main-monochromatic}
There are constants $\eps >0$ and $n_0 \in \N$ such 
any convex point set $P$ with $n \geq n_0$ points, colored
red and blue, admits
a separated monochromatic matching with at
least $n/2 + \eps n$ vertices.
\end{theorem}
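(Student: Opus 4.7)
The proof plan is to split by how balanced the coloring is and reduce the hard (nearly balanced) case to Theorem~\ref{thm:main} via a color-flipping trick. Write $r$ and $b$ for the numbers of red and blue points ($r+b=n$) and assume WLOG $r\ge b$. Fix a small threshold $\delta>0$, to be chosen below in terms of the $\eps_0$ given by Theorem~\ref{thm:main}.

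\emph{Unbalanced case, $r\ge n/2+\delta n$.} Here the majority color alone already produces the matching. The $r$ red points lie in convex position; bisect them by a line into arcs of sizes $\lfloor r/2\rfloor$ and $\lceil r/2\rceil$ and pair them across the bisector in the natural parallel way. The resulting $\lfloor r/2\rfloor$ red-red segments are pairwise disjoint and all meet the bisector, giving a separated monochromatic matching covering at least $r-1\ge n/2+\delta n-1$ vertices. For $\eps<\delta$ and $n$ large enough this already exceeds $n/2+\eps n$.

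\emph{Balanced case, $|r-b|<2\delta n$.} As a cut line rotates around $P$, the imbalance $(r_A+b_B)-(b_A+r_B)$ between its two arcs changes by $\pm 1$ per event and takes both signs, so by an integer intermediate-value argument I can choose $\ell$ so that, after flipping the colors of all points in $B$, the new coloring $c'$ has equal numbers of reds and blues (up to one point). Applying Theorem~\ref{thm:main} to $(P,c')$ yields a non-crossing bichromatic separated matching $M'$ with at least $n/4+(\eps_0/2)n$ edges, separated by some line $\ell'$. Pulling $M'$ back to $c$, an edge whose endpoints lie on the same side of $\ell$ keeps its bichromatic type, while an edge of $M'$ crossing $\ell$ has exactly one endpoint whose color was flipped and therefore becomes \emph{monochromatic} in $c$. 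The subfamily $M_\times\subseteq M'$ of edges crossing $\ell$ is thus a non-crossing separated monochromatic matching in $(P,c)$, with separator $\ell'$.

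\emph{Main obstacle.} The hard step is to guarantee that $M_\times$ is large, since Theorem~\ref{thm:main} provides no control over the relative position of $\ell$ and $\ell'$. A useful companion bound is the identity $M(c,\ell_0)+B(c,\ell_0)\ge\min(|A|,|B|)$, valid for every cut $\ell_0$ and obtained by pairing the first $\min(|A|,|B|)$ positions of the two arcs along the diagonal and splitting this diagonal matching into its monochromatic and bichromatic parts in $c$. For a balanced $\ell_0$ this reads $M(c,\ell_0)\ge n/2-B(c,\ell_0)$, so any balanced cut with $B(c,\ell_0)\le n/4-\eta n$ immediately produces a monochromatic matching of size at least $n/4+\eta n$. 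The delicate remaining scenario, in which $B(c,\ell_0)$ stays near $n/4$ on every balanced cut, imposes a strong rigidity on the coloring which I expect to handle by averaging the color-flip reduction over the one-parameter family of balanced cuts $\ell$: for some good $\ell$, the matching $M'(\ell)$ produced by Theorem~\ref{thm:main} must have a constant fraction of its edges crossing $\ell$, yielding the desired monochromatic matching in $(P,c)$.
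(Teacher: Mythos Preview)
Your unbalanced case is correct and essentially identical to the paper's treatment in Section~3.2.

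The balanced case, however, has a genuine gap that you yourself flag as the ``main obstacle'' and do not close. Your color-flip reduction to Theorem~\ref{thm:main} is a nice idea: a bichromatic edge of $M'$ crossing the flip line $\ell$ does indeed become monochromatic in the original coloring. The problem is that Theorem~\ref{thm:main} is an existence statement; it hands you \emph{some} separated bichromatic matching $M'$ with \emph{some} separator $\ell'$, and you have no control over how $\ell'$ sits relative to $\ell$. Nothing prevents essentially all edges of $M'$ from lying inside a single arc of $\ell$, in which case $M_\times$ is empty. Your companion bound $M(c,\ell_0)+B(c,\ell_0)\ge \min(|A|,|B|)$ disposes of colorings where some balanced diagonal matching is mostly monochromatic, but the residual case---every balanced diagonal matching has roughly $n/4$ bichromatic pairs---is precisely the hard one, and your proposed ``averaging over the one-parameter family of balanced cuts $\ell$'' is not an argument: $M'(\ell)$ is not even canonically defined, and an adversarial choice of $M'(\ell)$ for each $\ell$ could in principle always place the separator $\ell'$ so that few edges of $M'(\ell)$ cross $\ell$. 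Turning this into a proof would require either opening the black box of Theorem~\ref{thm:main} or a genuinely new counting argument relating the family $\{M'(\ell)\}_\ell$ to the cuts; neither is supplied.

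The paper does not attempt such a reduction. Instead it re-proves the balanced case from scratch (Theorem~\ref{thm:monochromaticbalanced}) by establishing monochromatic analogues of every ingredient in Section~\ref{sec:bichromatic}: the random chunk-matching lemma (Lemma~\ref{lem:randomchunkmatch-monochromatic} replaces Lemma~\ref{lem:randomchunkmatch}, now matching same-colored points within monochromatic chunk pairs and minority-colored points within bichromatic chunk pairs), and then counterparts of Theorem~\ref{thm:manyruns} and Lemmas~\ref{lem:smallchunks}--\ref{lem:smallmiddlechunks}. The whole machinery of $(k,\lambda)$-partitions, index control, and subchunk variance is rerun in the monochromatic setting. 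Only after this is the unbalanced case handled by the simple majority-color argument you give. So your proposal replaces a long but routine adaptation with a short reduction whose key step is missing.
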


There are two differences between
the statement of Theorem~\ref{thm:main}
and Theorem~\ref{thm:main-monochromatic}:
we do not require that the number
of red and blue points in $P$ is equal
(and hence the size of the matching
is stated in
terms of vertices instead 
of edges), and we need a lower bound
on the size of $P$. This is
necessary, because Theorem~\ref{thm:main-monochromatic}
does not always hold for, e.g., $n = 4$.
It was announced to us in a personal
communication that the construction of 
E.~Cs\'oka, Z.~Bl\'azsik, Z.~Kir\'aly and 
D.~Lenger from above also gives the
upper bound $cn + o(n)$ on the size of a
largest separated monochromatic matching,
where $c = 2 - \sqrt{2} \approx 0.5858$.

\subsection{Our results in the setting of
finite words}
\label{sec:biology}

As we already said, the problems 
in this paper were independently 
discovered by researchers in computational
biology and stringology.
In a study on protein folding algorithms,
Lyngs{\o} and Pedersen~\cite{Lyngso-Pedersen}
formulated a conjecture that is equivalent 
to saying that the bound in
Theorem~\ref{thm:main}
can be improved to $2n/3$ (for $n$ divisible
by $3$).
M\"ullner and
Ryzhikov~\cite[p.~461]{MullnerR19} 
write that this conjecture
``has drawn substantial attention from the
combinatorics of words community''.
For the convenience of readers from this
community, we rephrase our theorems
for separated matchings in the finite words
setting. We use the terminology of 
M\"ullner and Ryzhikov~\cite{MullnerR19},
without introducing it here.
The following  corresponds to
Theorem~\ref{thm:main}.

\begin{theorem}
\label{thm:main-words}
There is a fixed $\eps > 0$ such that for 
any even $n \in \N$, every 
binary circular word of length $n$ with 
equal number of zeros and ones has an
antipalindromic subsequence of length at 
least $n/2 + \eps n$.
\end{theorem}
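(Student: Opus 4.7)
The plan is to reduce Theorem~\ref{thm:main-words} directly to Theorem~\ref{thm:main} via the standard correspondence between antipalindromic subsequences of binary circular words and separated bichromatic matchings on convex point sets. Given a binary circular word $w$ of even length~$n$ with $n/2$ zeros and $n/2$ ones, I would place its letters at the vertices of a regular $n$-gon in clockwise order, coloring a vertex red if it carries the bit~$1$ and blue if it carries~$0$. This yields a convex point set $P$ with $n/2$ red and $n/2$ blue points, to which Theorem~\ref{thm:main} applies (with its parameter~$n$ replaced by~$n/2$); it then supplies a separated bichromatic matching $M$ on $P$ with at least $n/4 + \eps n/2$ edges.

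The main step is to verify that any separated bichromatic matching with $k$ edges on $P$ yields an antipalindromic subsequence of $w$ of length $2k$. After a generic perturbation, the common transversal line~$\ell$ of~$M$ avoids all points of $P$ and splits the polygon boundary into two arcs, each containing exactly~$k$ matching endpoints. Label these endpoints $a_1, \ldots, a_k$ on one arc and $b_1, \ldots, b_k$ on the other so that the cyclic order around $P$ reads $a_1, \ldots, a_k, b_k, b_{k-1}, \ldots, b_1$. A short planarity argument about chords of a convex polygon then forces $M$ to pair $a_i$ with $b_i$ for every~$i$, since in this cyclic order the pairing $\{a_i, b_i\}_{i=1}^{k}$ is the unique non-crossing matching between the two arcs; any transposition would interleave two chord endpoint pairs. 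Writing the $2k$ matching endpoints in the above cyclic order as $c_1 c_2 \cdots c_{2k}$, we have $c_i = a_i$ and $c_{2k+1-i} = b_i$ for $1 \leq i \leq k$, so the bichromatic condition $a_i \neq b_i$ is precisely $c_i \neq c_{2k+1-i}$, which is antipalindromicity.

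With the dictionary in place, the theorem follows at once: the matching supplied by Theorem~\ref{thm:main} yields an antipalindromic subsequence of length $2(n/4 + \eps n/2) = n/2 + \eps n$. I foresee no substantive obstacle, only the bookkeeping of the bijection between the geometric and word formulations; all of the hard content is absorbed by Theorem~\ref{thm:main}, and the same constant $\eps$ transfers across the reduction.
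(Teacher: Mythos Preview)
Your proposal is correct and matches the paper's approach: the paper simply asserts that Theorem~\ref{thm:main-words} ``corresponds to'' Theorem~\ref{thm:main}, treating the dictionary between separated bichromatic matchings and antipalindromic subsequences as known, without writing out any details. You have supplied exactly those details (the transversal line splitting the $2k$ endpoints into two arcs, the uniqueness of the non-crossing cross-matching forcing $a_i \leftrightarrow b_i$, and the resulting antipalindromic condition), and the parameter bookkeeping with $n$ replaced by $n/2$ is handled correctly.
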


The following corresponds to Theorem~\ref{thm:main-monochromatic}.

\begin{theorem}
\label{thm:main-monochromatic-words}
There are constants $\eps > 0$ and $n_0 \in \N$ 
so that for any $n \in \N$, $n \geq n_0$,
every binary circular
word of length $n$ has a palindromic
subsequence of length at least
$n/2 + \eps n$.
\end{theorem}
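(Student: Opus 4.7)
The plan is to reduce the statement to its geometric counterpart, Theorem~\ref{thm:main-monochromatic}, via the natural translation between binary circular words and bichromatic point sets in convex position. Given a binary circular word $w$ of length $n$, I would place $n$ points $p_0, p_1, \dots, p_{n-1}$ on the unit circle in clockwise order and color $p_i$ red if $w_i = 0$ and blue if $w_i = 1$. This yields a bichromatic convex point set $P$ of size $n$ to which Theorem~\ref{thm:main-monochromatic} applies.

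Next, I would verify that palindromic subsequences of $w$ correspond precisely to non-crossing separated monochromatic matchings on $P$, with palindrome length matching the number of vertices covered (up to the middle letter in the odd case, which is harmless). In the forward direction, given a palindromic subsequence $w_{i_1} w_{i_2} \cdots w_{i_\ell}$ with indices listed in circular order, I would pair $p_{i_j}$ with $p_{i_{\ell+1-j}}$ for $j = 1, \dots, \lfloor \ell/2 \rfloor$. The palindrome condition makes each pair monochromatic; the indices of the ``first half'' lie consecutively on one circular arc while those of the ``second half'' lie on the complementary arc in reverse pairing order, so the chords are pairwise non-crossing and all crossed by any line that separates these two arcs, which is exactly the separated matching property. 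Conversely, the separating line of a monochromatic separated matching splits $P$ into two arcs; reading the matched endpoints along one arc in cyclic order followed by the matched endpoints along the other arc in reverse cyclic order yields a palindromic subsequence of $w$ of length equal to the number of matched vertices.

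Having set up this bijection, I would apply Theorem~\ref{thm:main-monochromatic} to $P$: for $n \ge n_0$, there is a separated monochromatic matching covering at least $n/2 + \eps n$ vertices. The correspondence then produces a palindromic subsequence of $w$ of length at least $n/2 + \eps n$, with the same constants (any single-letter loss from parity is absorbed into a marginally smaller $\eps$ or a marginally larger $n_0$). The main obstacle, as I see it, is not any single difficult step but simply making the correspondence watertight—pinning down the notion of ``palindrome in a circular word'' so that starting position, orientation, and odd length are handled unambiguously, and checking that a common transversal line on the geometric side really is what encodes the palindromic splitting on the word side. Once this dictionary is fixed, the result is an immediate restatement of Theorem~\ref{thm:main-monochromatic}.
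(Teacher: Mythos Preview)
Your proposal is correct and follows exactly the approach the paper takes: the paper simply states that Theorem~\ref{thm:main-monochromatic-words} ``corresponds to Theorem~\ref{thm:main-monochromatic}'' and gives no further argument, treating the word and geometric formulations as equivalent. You have spelled out that dictionary in more detail than the paper does, but the route is the same.
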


\section{Existence of large separated 
bichromatic matchings}
\label{sec:bichromatic}

In this section, we prove our main result: large
separated bichromatic matchings exist.

\subsection{Runs and separated matchings}
\label{sec:proofmanyruns}

A \emph{run} of $P$ is a maximal sequence
$p_i, p_{i + 1}, \dots, p_{i + \ell}$ of consecutive points
with the same color.\footnote{When calculating with indices 
of points in $P$, we will always work modulo $2n$.} That
is, for $j = i, \dots, i + \ell - 1$, the color of $p_j$ and
of $p_{j + 1}$ are the same, and the colors of $p_{i - 1}$ and 
$p_i$ and the colors of $p_{i + \ell}$ and $p_{i + \ell + 1}$
are different.
The number of runs is always even.
Kyn\v cl, Pach, and T\'oth showed that
if $P$ contains $t$ runs, then $P$ admits an
alternating path of length $n +
\Omega(t)$~\cite[Lemma~3.2]{KynclPaTo08}.
We will need the following analogous result for separated 
matchings.

\begin{theorem}
\label{thm:manyruns}
Let $c_1 = 1/32$ and $t \geq 4$.
Let $P$ be a bichromatic convex point set with $2n$ points, $n$ red and $n$ blue,
and suppose that $P$
has $t$ runs. Then, $P$ admits a separated matching with at least
$n/2 + c_1 t^2/n$ edges.
\end{theorem}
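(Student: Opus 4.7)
The plan is to reformulate Theorem~\ref{thm:manyruns} as a longest-common-subsequence (LCS) statement and then exploit the large number of runs to beat the trivial lower bound of $n/2$.

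First, I would use a discrete intermediate-value argument to choose a suitable separating line $L$. As a candidate line rotates around $\conv(P)$, the number of red points in each of the two arcs changes by at most one per step, so some rotation produces a line $L$ that splits $P$ into two arcs $A,B$ of $n$ points each, with exactly $n/2$ red and $n/2$ blue points in each arc. Because the number of runs in each arc also changes by at most one per rotation step and $t\ge 4$, I can additionally insist that each arc contains $\Omega(t)$ of the $t$ runs of $P$. Ordering $A$ and $B$ clockwise and writing $\alpha$ for the color sequence of $A$ and $\bar\beta$ for the complement of the reversed color sequence of $B$, a non-crossing bichromatic matching across $L$ corresponds exactly to a common subsequence of $\alpha$ and $\bar\beta$; the maximum separated matching with respect to $L$ is therefore $\mathrm{LCS}(\alpha,\bar\beta)$. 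The color balance of $L$ makes $\alpha$ and $\bar\beta$ have matching letter counts ($n/2$ of each), and the monochromatic common subsequence consisting only of $R$'s already yields $\mathrm{LCS}(\alpha,\bar\beta)\ge n/2$, which recovers the easy lower bound.

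To beat $n/2$ by the desired $c_1 t^2/n$, I would augment this monochromatic ``$R$-backbone'' with $B$-insertions. Between two consecutive matched $R$-pairs $(a_{i_k},b_{j_k})$ and $(a_{i_{k+1}},b_{j_{k+1}})$ of the backbone, an additional matched $B$-pair can be inserted whenever both the $\alpha$-gap between $i_k$ and $i_{k+1}$ and the $\bar\beta$-gap between $j_k$ and $j_{k+1}$ contain at least one $B$, and each such insertion adds $+1$ to the LCS. Because each of $\alpha,\bar\beta$ has $\Omega(t)$ runs, a constant fraction of consecutive-$R$-pairs on each side has a non-empty $B$-gap between them, giving $\Omega(t)\cdot\Omega(t)=\Omega(t^2)$ candidate insertion pairs. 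An averaging/weighted-pigeonhole argument over the at most $n/2$ available slot positions then yields $\Omega(t^2/n)$ simultaneously realizable insertions, producing a separated matching of size at least $n/2 + c_1 t^2/n$, with $c_1 = 1/32$ emerging from the explicit constants.

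The main obstacle is precisely this last compatibility step: non-crossing insertions on the two sides must respect a common monotone order, and a single $B$-rich gap on one side may be ``claimed'' by several gaps on the other, so one cannot simply multiply the two $\Omega(t)$ counts and divide by $n$. Making this rigorous---for instance via a careful averaging or rearrangement argument over the gap-length profiles of $\alpha$ and $\bar\beta$, or by a second-moment computation indexed by the position of the would-be insertion---is where I expect most of the technical work to appear, and the explicit constant $c_1=1/32$ should fall out of that bookkeeping.
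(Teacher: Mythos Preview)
Your approach has two genuine gaps, not just technicalities. First, the simultaneous intermediate-value step fails: a line that balances the colors \emph{and} leaves $\Omega(t)$ runs on each side need not exist. Take $P$ colored circularly as $R^{n/2}B^{n/2}(RB)^{n/2}$, so $t=n+2$. An arc of $n$ consecutive points with exactly $n/2$ red points must contain essentially all of the block $R^{n/2}B^{n/2}$ (or its complement in the alternating part); one checks that every color-balanced split leaves only $O(1)$ runs on one side. Second, even granting such a split, the ``compatibility step'' you identify as the obstacle is not bookkeeping but the entire content, and the heuristic $\Omega(t)\cdot\Omega(t)/(n/2)$ is simply false. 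With the $R$-backbone fixed, a $B$-insertion at slot $k$ requires a $B$ between the $k$-th and $(k{+}1)$-th $R$ in \emph{both} $\alpha$ and $\bar\beta$; you are intersecting two specific subsets of $\{1,\dots,n/2\}$, each of size $\Omega(t)$, and that intersection can be empty when $t\ll n$. No averaging over the $n/2$ slot positions converts $\Omega(t^2)$ incompatible candidate pairs into $\Omega(t^2/n)$ realizable insertions.

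The paper sidesteps both issues by never fixing a line. It partitions the edges of the complete geometric graph on $P$ into $2n$ parallel (hence separated) matchings $M_0,\dots,M_{2n-1}$ and lets $M_i'$ be the bichromatic edges of $M_i$; since there are $n^2$ bichromatic edges in total, the average $|M_i'|$ is exactly $n/2$. Then, whenever $p_j$ is the first point of one red run and $p_k$ the last point of another, the monochromatic edge $p_jp_k$ lies in some $M_i$ with $p_{j-1},p_{k+1}$ blue, so $p_jp_{k+1}$ can be added to $M_i'$ without spoiling separatedness. This contributes $\binom{t/2}{2}$ extra edges across all $2n$ matchings, so some $M_i''$ has at least $n/2+\binom{t/2}{2}/(2n)\ge n/2+t^2/(32n)$ edges. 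The averaging over all separating directions is precisely what makes the $t^2/n$ gain fall out; committing to a single line, as you do, loses it.
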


\begin{proof}
We partition the edges of the complete geometric 
graph on $P$ into $2n$ parallel matchings $M_0,\dots,M_{2n-1}$, see
Figure~\ref{fig:thm15}.
\begin{figure}
    \centering
    \includegraphics{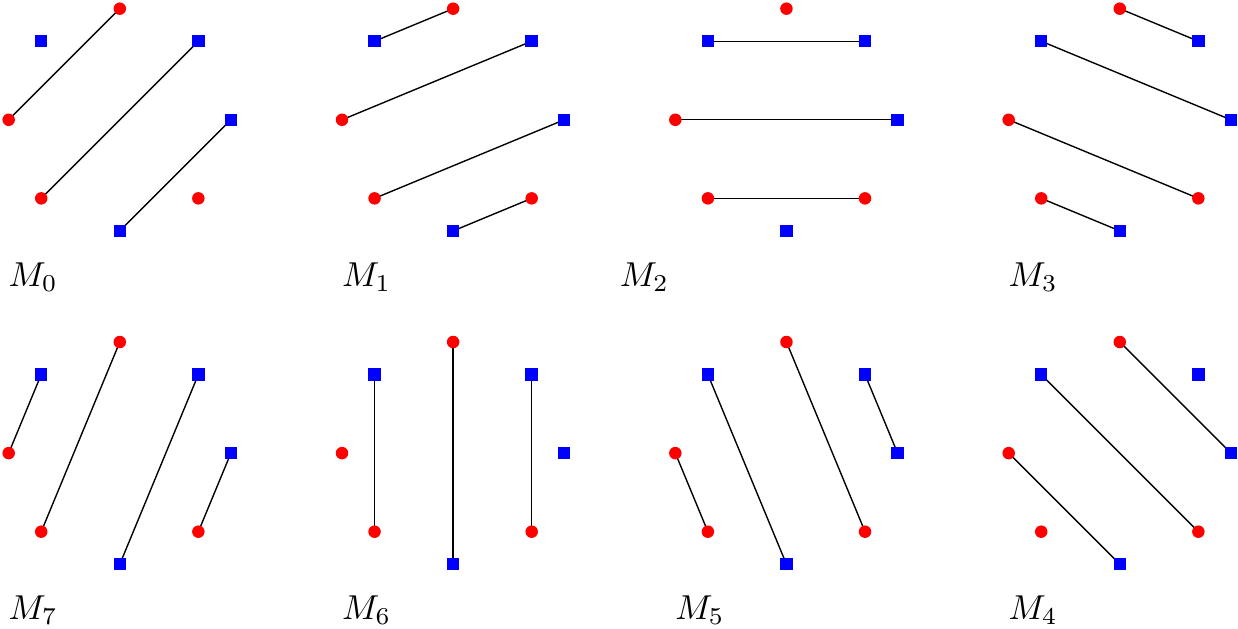}
    \caption{We partition the edges of
    the complete geometric graph on $P$ into
    $2n$ parallel matchings.}
    \label{fig:thm15}
\end{figure}
For $i = 0, \dots, 2n-1$, let $M_i'$ be the 
submatching of $M_i$ that consists of the
bichromatic edges of $M_i$. Every 
$M_i'$ is a separated matching, and the $2n$ 
matchings $M_0',\dots,M_{2n-1}'$ together 
contain all the $n^2$ bichromatic edges on $P$. 
Thus, the average number of edges in
a matching from $M_0', \dots, M_{2n-1}'$ is 
$n^2/2n = n/2$.

Suppose now that $p_j$ and $p_k$ are two 
distinct red points such that $p_j$ is the
(clockwise)
first point of a red run and 
$p_k$ is the (clockwise) 
last point of a different red run. 
Let $M_i$ be the parallel matching that contains
the edge
$p_jp_k$. Then, $p_{j - 1}$ and
$p_{k + 1}$ are blue, and
either $p_{j - 1} = p_{k + 1}$ or
$p_{j-1}p_{k+1}$ is
a monochromatic blue edge in $M_i$.
Thus, the matching 
that is obtained from $M_i'$
by adding the bichromatic edge $p_jp_{k+1}$ is
still a separated matching, and similarly for the
matching $M_i''$ obtained from $M_i'$ by
making all the possible additions of this kind, 
see Figure~\ref{fig:thm15_b}. 
\begin{figure}
    \centering
    \includegraphics{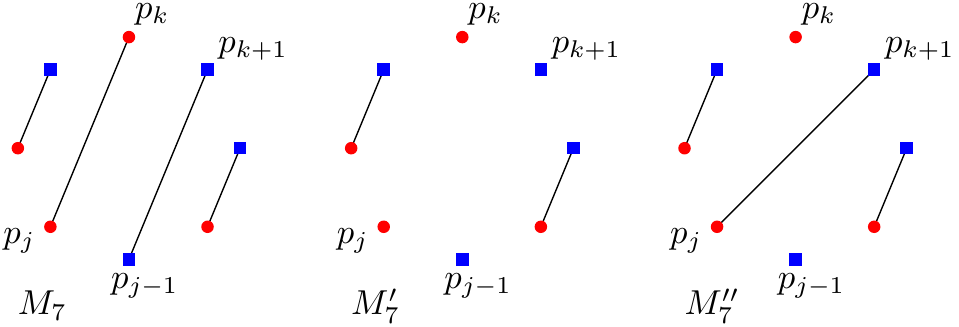}
    \caption{For any monochromotic edge 
    $p_jp_k$ in $M_j$ that connects the first
    (clockwise) point $p_j$ of a red run with the
    last (clockwise) point $p_k$ of another red run,
    we add to $M_j'$ the bichromatic edge 
    $p_jp_{k+1}$. The resulting matching is
    called $M_i''$.}
    \label{fig:thm15_b}
\end{figure}
Since there are $t/2$ red runs, 
the total number of edges that
we add in the matchings
$M_0'', \dots, M_{2n-1}''$ is
$\binom{t/2}{2}$.  Hence, the average size of a
matching from $M_0'', \dots, M_{2n-1}''$ is 
\[
\frac{n}{2}+\frac{\binom{t/2}{2}}{2n}=
\frac{n}{2} + \frac{t(t - 2)}{16n}
\geq \frac{n}{2} + \frac{t^2}{32n},
\]
since $t \geq 4$ and hence $t-2 \geq t/2$.
In particular, at least one of the
matchings $M_i''$ has the desired number
of edges.
\end{proof}

\subsection{Chunks, partitions, and configurations}
\label{sec:chunks}
Let $k \in \{1, \dots, n\}$. A \emph{$k$-chunk} is 
a sequence of consecutive points in $P$ with 
 exactly $k$ points of one color and 
less than $k$ points of the other color. Hence, 
a $k$-chunk has at least $k$ and at most 
$2k - 1$ points. A \emph{clockwise $k$-chunk 
with starting point $p_i$} is the shortest 
$k$-chunk that starts from $p_i$ in clockwise 
order. A \emph{counterclockwise $k$-chunk} with
starting point $p_i$ is defined analogously,
going in the counterclockwise direction. For a 
$k$-chunk $C$, we denote by $r(C)$ the number 
of red points and by $b(C)$  the number
of blue points in $C$. We call $C$ a 
\emph{red chunk} if $r(C) = k$ (and hence 
$b(C) < k$) and a \emph{blue chunk} if $b(C) = k$ 
(and hence $r(C) < k$). The \emph{index} of 
$C$ is $b(C)/k$ for a red chunk and
$r(C)/k$ for a blue chunk. Thus, the index of $C$
lies between $0$ and $(k-1)/k$, and it 
measures how ``mixed'' $C$ is.

Next, let $k \in \{1, \dots, n\}$ and
 $\lambda \in \N \cup \{0\}$. We define a 
\emph{$(k, \lambda)$-partition}. Suppose that 
$k$ is odd. First, we construct a maximum 
sequence $C_0, C_1, \dots$ of clockwise 
disjoint $k$-chunks, as follows: we begin 
with the clockwise $k$-chunk $C_0$ with starting point 
$p_0$, and we let $\ell_0$ be the number of 
points in $C_0$. Next, we take the clockwise $k$-chunk 
$C_1$ with starting point $p_{\ell_0}$, and 
let $\ell_1$ be the number of points in $C_1$. 
After that, we take the clockwise $k$-chunk $C_2$ with 
starting point $p_{\ell_0 + \ell_1}$, and so 
on. We stop once we reach the last $k$-chunk 
that does not overlap with $C_0$. Next, we 
construct a maximum sequence $D_0, D_1, \dots$ 
of \emph{counterclockwise} $(k + 3)$-chunks, 
starting with the point $p_{2n - 1}$, in an 
analogous manner. Let $\lambda'$ be the 
minimum of $\lambda$ and the number of $(k+3)$-chunks 
$D_i$. Now, to obtain the $(k, \lambda)$-partition, 
we take $\lambda'$ counterclockwise $(k + 3)$-chunks 
$D_0, \dots, D_{\lambda' - 1}$ and a maximum number 
of clockwise $k$-chunks $C_0, C_1, 
\dots$ that do 
not overlap with $D_0, \dots, D_{\lambda' - 1}$.
If $k$ is even, the $(k, \lambda)$-partition is 
defined analogously, switching the roles of the 
clockwise and the counterclockwise direction.
\begin{figure}
    \centering
    \includegraphics{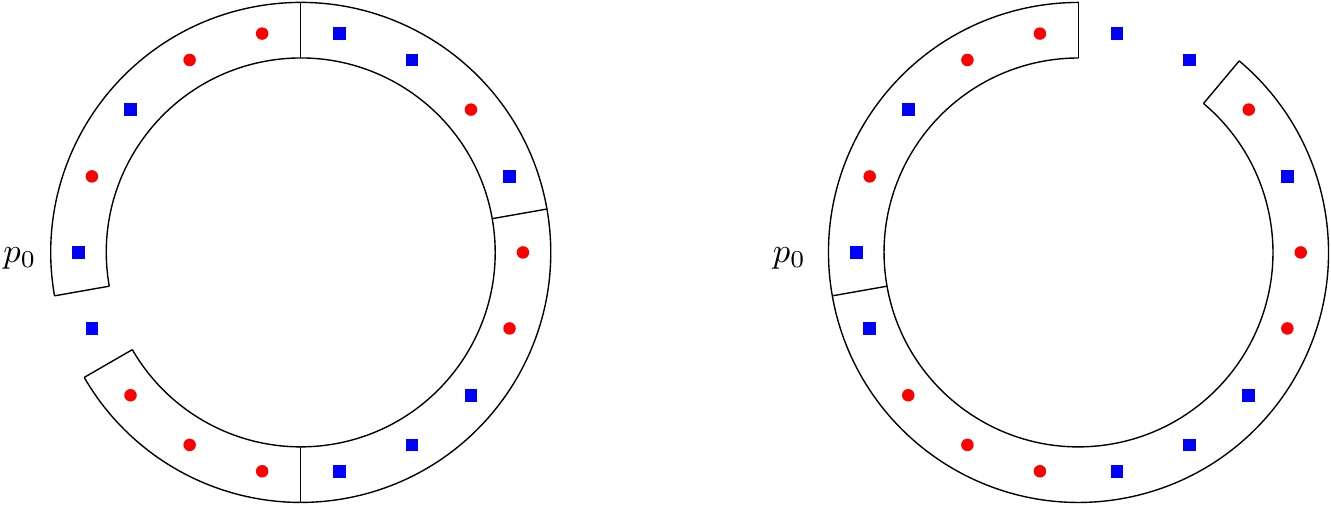}
    \caption{A set of 18 points and its 
    $(3, 0)$-partition (left) and 
    $(3,1)$-partition (right). In the 
    $(3,0)$-partition, the first chunk 
    is red with index $2/3$, the second 
    chunk is blue with index $1/3$,  
    the third chunk is blue with index $2/3$, 
    and the fourth chunk is red with index $0$. 
    The average red index is $1/3$, the 
    average blue index is $1/2$. The index of 
    the $(3, 0)$-partition is $1/2$.
    The $(3, 1)$-partition has one clockwise 
    $3$-chunk and one counterclockwise $6$-chunk.
    The $3$-chunk is red with index $2/3$,
    the $6$-chunk is red with index $5/6$.
    The average red index is $3/4$, the average
    blue index is $0$
    The index of the $(3,1)$-partition is $3/4$.}
    \label{fig:chunks}
\end{figure}
There may
be some points
that do not lie in any chunk
of the $(k, \lambda)$-partition.
We call these points \emph{uncovered}.

The \emph{average red index} of $\Gamma$ is the 
average index in a red chunk of $\Gamma$ 
($0$, if there are no red chunks). 
The \emph{average blue index}
of $\Gamma$ is defined analogously.
The \emph{index} of $\Gamma$ is the maximum
of the average red index and the average
blue index of $\Gamma$. The \emph{max-index color}
is the color whose average index achieves the
index of $\Gamma$, the other color is called
the \emph{min-index color}, see Figure~\ref{fig:chunks} 
for an
illustration of the concepts so far.
The following simple proposition helps
us bound the number of chunks.

\begin{proposition}
\label{prop:chunkcount}
Let $P$ be a convex bichromatic
point set with $2n$ points, $n$ red
and $n$ blue, and 
let $\Gamma$ be a $(k, \lambda)$-partition
of $P$. 
In $\Gamma$, there are at 
most $2k - 2$ uncovered points,
at most $k - 1$ of them red
and at most $k - 1$ of them blue.
Furthermore, let $R$ be the number
of red chunks and $B$ the number of 
blue chunks in $\Gamma$, and let
$\alpha$ be the index of $\Gamma$.
Then,
\begin{equation}
\label{equ:chunkupper}
  R + B \leq \frac{2n}{k} \quad \text{and} \quad
  \max \{R, B\} \leq \frac{n}{k}.
\end{equation}
Furthermore,  we have 
\begin{multline}
\label{equ:chunklower}
 R + B \geq \left\lfloor \frac{2n}{2k + 5}
 \right\rfloor
  > \frac{2n}{7k} - 1,
\quad 
\max \{R, B\} \geq 
\frac{1}{2}\left\lfloor \frac{2n}{2k + 5}
 \right\rfloor
  > \frac{n}{7k} - \frac{1}{2},
  \\\text{and} \quad
  \min \{R, B\} \geq
  \frac{1 - \alpha}{2} \left \lfloor
\frac {2n}{2k + 5} \right \rfloor 
- \frac{k - 1}{k + 3}
> (1-\alpha)\frac{n}{7k} - 2.
\end{multline}
If $\lambda = 0$, the lower bounds improve
to 
\begin{multline}
\label{equ:chunklowerlzero}
R + B \geq \left\lfloor \frac{2n}{2k - 1} \right\rfloor 
\geq \frac{n}{k} - 1,
\quad 
\max \{R, B\} \geq 
\frac{1}{2}\left\lfloor \frac{2n}{2k - 1}
 \right\rfloor
  > \frac{n}{2k} - \frac{1}{2},
  \\\text{and} \quad
  \min \{R, B\} \geq
  \frac{1 - \alpha}{2} \left \lfloor
\frac {2n}{2k - 1} \right \rfloor 
- \frac{k - 1}{k}
> (1 - \alpha)\frac{n}{2k} - 2.
\end{multline}
\end{proposition}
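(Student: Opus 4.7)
The plan is to establish the four claims in turn---uncovered-points bound, upper bounds \eqref{equ:chunkupper}, lower bounds on $R+B$ and $\max\{R,B\}$, and the lower bound on $\min\{R,B\}$---using the structural properties of the $(k,\lambda)$-partition. For the uncovered points, I would argue by maximality: the uncovered region forms a single arc between the last clockwise chunk $C_m$ and the innermost counterclockwise chunk $D_{\lambda'-1}$ (or wraps to $C_0$ if $\lambda'=0$). If the arc contained $\geq k$ points of either color, the clockwise $k$-chunk rooted at its first uncovered point would lie entirely within the arc (since the walk stops as soon as one color reaches $k$) and could be appended to the CW sequence without overlapping any CCW chunk, contradicting maximality. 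Hence at most $k-1$ uncovered points of each color, and $\leq 2k-2$ in total.

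The upper bounds \eqref{equ:chunkupper} are immediate: every red chunk contributes $\geq k$ red points, so $kR\leq n$ and $R\leq n/k$, symmetrically for $B$; every chunk has $\geq k$ points, so $k(R+B)\leq 2n$. For the lower bound on $R+B$, the complementary observation is that every chunk has at most $2k+5$ points (a CCW $(k+3)$-chunk reaches this; a CW $k$-chunk has at most $2k-1$). Combined with the uncovered-points bound, $(2k+5)(R+B)+(2k-2)\geq 2n$, and integrality gives $R+B\geq \lfloor 2n/(2k+5)\rfloor$. The numerical bound $>2n/(7k)-1$ then follows from $2k+5\leq 7k$ for $k\geq 1$, and the $\max\{R,B\}\geq (R+B)/2$ bound is immediate. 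For $\lambda=0$, no CCW chunks appear and the chunk-size bound improves to $2k-1$, yielding \eqref{equ:chunklowerlzero}.

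The hard part is the $\min\{R,B\}$ bound. Without loss of generality assume red is the max-index color, so the average red index is $\alpha$ and the average blue index $\beta$ satisfies $\beta\leq\alpha$. I would derive a balance equation from $\text{total red}=\text{total blue}=n$: summing $r(C)-b(C)$ over chunks, each red chunk contributes its majority size times $(1-\text{index})$ and each blue chunk the negative of the same. After accounting for the extra factor $3$ carried by $(k+3)$-chunks and the uncovered discrepancy $|\epsilon_R-\epsilon_B|\leq k-1$, this gives
\[
k(1-\alpha)R - k(1-\beta)B \;=\; O(\lambda + k).
\]
Since $1-\beta\geq 1-\alpha$, this implies $R\geq B-O(1)$, so $\min\{R,B\}=B-O(1)$. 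Rewriting the balance as
\[
(1-\alpha)(R+B) \;\leq\; \bigl[(1-\alpha)+(1-\beta)\bigr]B + O(\lambda+k)/k \;\leq\; 2B + O(1)
\]
gives $B\geq (1-\alpha)(R+B)/2 - O(1)$; careful bookkeeping should pin the additive error at $(k-1)/(k+3)$ (and $(k-1)/k$ for $\lambda=0$). Combining with the lower bound on $R+B$ from the previous step yields the claimed inequalities. The $\lambda=0$ case simplifies because every chunk shares the parameter $k$, removing the $O(\lambda)$ term and producing the sharper estimates in \eqref{equ:chunklowerlzero}.
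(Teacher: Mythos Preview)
Your treatment of the uncovered points, the upper bounds~\eqref{equ:chunkupper}, and the lower bounds on $R+B$ and $\max\{R,B\}$ is correct and matches the paper's argument (indeed, your maximality argument for the uncovered arc is more explicit than what the paper writes).

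The gap is in your bound on $\min\{R,B\}$. Your balance identity
\[
k(1-\alpha)R - k(1-\beta)B \;=\; O(\lambda+k)
\]
is correct, but the subsequent claim that this yields an additive error of only $O(1)$---and in particular $(k-1)/(k+3)$---is not. Dividing by $k$, the error term is $O(1+\lambda/k)$, and $\lambda$ may be as large as $\Theta(n/k)$; then $\lambda/k = \Theta(n/k^2)$, which is unbounded for the regime of $k$ used in the paper (e.g.\ $k$ a fixed constant in Lemma~\ref{lem:smallchange}). So ``careful bookkeeping'' cannot pin the error at $(k-1)/(k+3)$ along this route: the $O(\lambda)$ contribution from the $(k+3)$-chunks is genuinely there and does not cancel.

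The paper sidesteps this by \emph{not} introducing $\beta$ at all. Assuming (by symmetry) that $B=\min\{R,B\}$, it uses the crude per-chunk bound $b(D)-r(D)\le k+3$ for every blue chunk $D$, giving $\sum_D\bigl(b(D)-r(D)\bigr)\le (k+3)B$ with no $\lambda$-dependence. On the red side it bounds $\sum_C\bigl(b(C)-r(C)\bigr)$ via $r(C)\le k+3$ and the average red index, again without $\lambda$. Plugging these together with $\gamma\le k-1$ into the balance equation yields $B\ge (1-\alpha)R-(k-1)/(k+3)$ directly, after which one substitutes the already-established lower bound on $R=\max\{R,B\}$. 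The moral: to get a $\lambda$-free additive constant, bound the \emph{minority} side coarsely rather than through its index.
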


\begin{proof}
Since a red chunk in $\Gamma$ contains
at least $k$ red points, we have 
$R \leq n/k$. Similarly, $B \leq n/k$, and 
(\ref{equ:chunkupper}) follows. 

Any chunk in 
$\Gamma$ has at most $2k + 5$ points.
Thus, the total number of chunks is 
at least 
$R + B \geq \lfloor 2n/(2k + 5 )\rfloor >
2n/7k - 1$, using $k \geq 1$.
This is the first bound of (\ref{equ:chunklower}).
The second bound follows from 
the inequality $\max \{R, B\} \geq (R + B)/2$.
For the third bound, suppose
that $R = \max \{R, B\}$ and $B = \min\{R, B\}$.
Let $\gamma$ be the difference between the
number of uncovered blue points and the
number of uncovered red points in $\Gamma$.
Since the number of red points in $P$ 
is the same as the number of blue points, 
we have
\begin{equation}
\label{equ:balance}
0 = \gamma + \sum_{C} (b(C) - r(C)) +
\sum_{D} (b(D) - r(D)),
\end{equation}
where the first sum ranges over the 
red chunks $C$ of $\Gamma$, and the second
sum ranges over the blue chunks $D$ of
$\Gamma$. Now, we have $\gamma \leq k - 1$,
since there are at most $k - 1$ uncovered blue
points. Furthermore, we have
\[
\sum_C (b(C) - r(C))
= \sum_C r(C) 
\left( \frac{b(C)}{r(C)} - 1\right)
\leq (k + 3) \sum_C 
\left(\frac{b(C)}{r(C)} - 1\right) = 
(k + 3) (\alpha - 1)R,
\]
since $r(C) \leq k + 3$ 
for
every chunk $C$ and 
$(1/R)\sum_C b(C)/r(C) = \alpha$. 
Finally, we have $b(D) - r(D) \leq k  + 3$
for every blue chunk $D$. 
Thus, (\ref{equ:balance})
implies that
\[
0 \leq k - 1 + 
(k + 3) (\alpha - 1)R + (k + 3)B
\quad \Rightarrow \quad 
B \geq (1-\alpha) R - \frac{k - 1}{k + 3}. 
\]
Using the previous lower bound on $R$, this 
gives
\[
B \geq \frac{1 - \alpha}{2} \left \lfloor
\frac {2n}{2k + 5} \right \rfloor 
- \frac{k - 1}{k + 3}
> (1- \alpha)\frac{n}{7 k} - 2,
\]
and hence~(\ref{equ:chunklower}). 
To obtain~(\ref{equ:chunklowerlzero}),
we argue in the same manner, but we
use the fact that for $\lambda = 0$,
all chunks have at most 
$2k - 1$ points and exactly $k$ points
of the majority color.
\end{proof}

The purpose of the $(k, \lambda)$-partitions is 
to transition smoothly between the $(k, 0)$-partition 
and the $(k + 3, 0)$-partition. In our proof, this will 
enable us to gradually increase the chunk-sizes,
while keeping the index 
under control.

A \emph{$k$-configuration} of $P$ is a partition of
$P$ into $k$-chunks, leaving no uncovered
points, see Figure~\ref{fig:configuration}. 
In contrast
to a $(k, \lambda)$-partition, the chunks in a
$k$-configuration are not necessarily minimal.
Note that while $P$ always has a $(k, \lambda)$-partition,
it does not necessarily 
admit a $k$-configuration. The average
red index, the average blue index, etc.~of 
a $k$-configuration
are defined as for a $(k, \lambda)$-partition.
The following proposition helps us bound
the number of chunks in a $k$-configuration.

\begin{figure}
    \centering
    \includegraphics{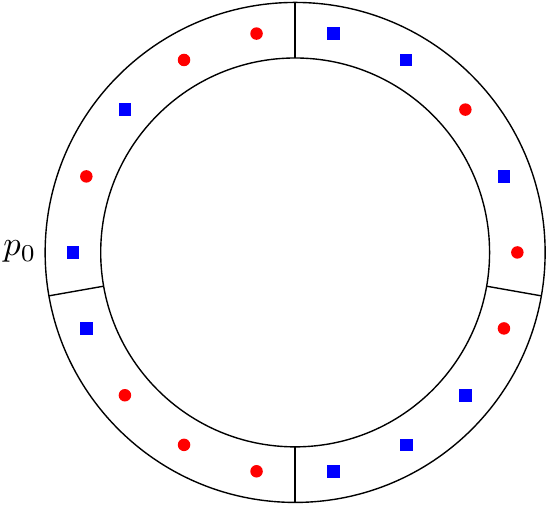}
    \caption{A set of 18 points and a $3$-configuration for
    it. The chunk from $p_0$ is red with index $2/3$, the
    next clockwise chunk is blue with index $2/3$,
    followed by another blue chunk of index $1/3$
    and a final red chunk of index $1/3$. The average blue
    index and the average red index are both $1/2$. Note that
    the chunks are not minimal.}
    \label{fig:configuration}
\end{figure}

\begin{proposition}
\label{prop:confcount}
Let $P$ be a convex bichromatic
point set with $2n$ points, $n$ red
and $n$ blue, and 
let $\Gamma$ be a $k$-configuration
of $P$. 
Let $R$ be the number
of red chunks, $B$ the number of 
blue chunks,$\alpha$ the 
average red index and $\beta$
the average blue index of $\Gamma$.
Then, 
\begin{equation}
\label{equ:chunkpoints}
n = kR + \beta kB = kB + \alpha  k R.
\end{equation}
Furthermore, 
$R + B \geq n/k$, $\max\{R, B\} \geq n/2k$, and
$\min\{R, B\} \geq (1  - \max\{\alpha, \beta\})n/2k$.
Finally, $\max\{R, B\} = R$ if 
and only if $\alpha \geq \beta$.
\end{proposition}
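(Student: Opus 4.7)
The plan is to derive everything from two point-counting identities. Since each red chunk contributes exactly $k$ red points and each blue chunk contributes exactly $r(C)$ red points, summing over blue chunks and using the definition of $\beta$ gives $\sum_{\text{blue } C} r(C) = \beta k B$. Therefore the total number of red points is $kR + \beta k B$, and this must equal $n$. A completely symmetric count on the blue side yields $n = kB + \alpha k R$. This establishes (\ref{equ:chunkpoints}), and subtracting the two equations gives the clean identity
\[
R(1 - \alpha) = B(1 - \beta),
\]
which will drive the rest of the proof.

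Since each chunk is a $k$-chunk, every index is at most $(k-1)/k < 1$, so $\alpha, \beta < 1$. From $n = kR + \beta kB \leq k(R + B)$ we immediately get $R + B \geq n/k$, and hence $\max\{R, B\} \geq (R+B)/2 \geq n/(2k)$. This handles the first two bounds.

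For the lower bound on $\min\{R, B\}$, assume without loss of generality that $R \geq B$, so $\min\{R, B\} = B$. Because $1 - \alpha$ and $1 - \beta$ are both positive, the identity $R(1-\alpha) = B(1-\beta)$ forces $1 - \beta \geq 1 - \alpha$, i.e., $\alpha \geq \beta$, which is exactly the biconditional in the final claim (and is reversible, yielding the other direction). In particular, $\max\{\alpha, \beta\} = \alpha$. Rewriting the identity as $B = R(1-\alpha)/(1-\beta)$ and using $R \geq n/(2k)$ together with $1/(1-\beta) \geq 1$ (since $\beta \geq 0$), we obtain
\[
B \;\geq\; \frac{n}{2k} \cdot \frac{1 - \alpha}{1 - \beta} \;\geq\; \frac{(1 - \alpha)\, n}{2k} \;=\; \frac{(1 - \max\{\alpha, \beta\})\, n}{2k},
\]
as desired. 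There is no real obstacle here: the whole proof is bookkeeping on top of the two counting identities, and the identity $R(1-\alpha) = B(1-\beta)$ does all the work at once, simultaneously giving the monotonicity statement (5) and the balancing lower bound (4).
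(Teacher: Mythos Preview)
Your proof is correct and follows essentially the same line as the paper: both derive (\ref{equ:chunkpoints}) by counting, extract the identity $(1-\alpha)R = (1-\beta)B$, and use it to get the biconditional and the $\min\{R,B\}$ bound. The only difference is in how you obtain $R+B \geq n/k$: the paper argues via chunk sizes (each chunk has at most $2k-1$ points, so $R+B \geq 2n/(2k-1) \geq n/k$), whereas you read it off directly from $n = kR + \beta kB \leq k(R+B)$, which is cleaner and entirely self-contained.
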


\begin{proof}
The red chunks in $\Gamma$ contain 
$kR$ red points and $\alpha kR$ blue points,
while
the blue chunks contain $kB$ blue points
and $\beta kB$ red points. All points
are covered by the chunks, and there 
are $n$ red points and $n$ blue points. 
This implies
(\ref{equ:chunkpoints}).
Since all points in $P$ are covered
by the chunks in $\Gamma$, the lower bound
$R + B \geq \lfloor 2n/(2k - 1) \rfloor$ 
from (\ref{equ:chunklowerlzero})  becomes
$R + B \geq 2n/(2k - 1) \geq n/k$, and
hence also $\max\{R, B \} \geq n/2k$.
Finally, from (\ref{equ:chunkpoints}),
we get
$(1- \alpha)R = (1 - \beta)B$,
so $\max \{R, B\} = R$ if and only if
$\alpha \geq \beta$, i.e., the number
of chunks of the max-index color
is at least the number of chunks of the
min-index color.
This also implies that 
$B \geq (1 - \alpha)R$ and 
$R \geq (1 - \beta )B$.
In particular,
$\min\{R, B\} \geq (1 - \max\{\alpha, \beta))
\max\{R, B\}$.
\end{proof}

In our proof, the key challenge will be 
to analyze $k$-configurations with 
small  constant index (say, 
around $0.1$).

\subsection{From 
\texorpdfstring{$(k, \lambda)$}{(k, lambda)}-partitions to
\texorpdfstring{$k$}{k}-configurations}

Our first goal is to show that we can focus on
$(k, \lambda)$-partitions with large $k$ and constant,
but not too large index.  We begin by noting that
if the $(k, 0)$-partition of $P$ for a constant $k$ has a large
index, then we can find a long alternating path in 
$P$. 

\begin{lemma}
\label{lem:smallchunks}
Set $c_2 = 1/12800$. 
Let $k, n \in \N$ with
$8k^2 \leq n$. Let $P$ be
a convex bichromatic point set with
$2n$ points, $n$ red and $n$ blue.
If the
$(k, 0)$-partition $\Gamma$ of $P$ has index at least $0.1$, 
then $P$ admits a separated matching of size at least $(1/2 + c_2/k^4)n$.
\end{lemma}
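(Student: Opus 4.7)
The plan is to use the high-index hypothesis to produce many runs in $P$, and then to apply Theorem~\ref{thm:manyruns}. Without loss of generality, I would assume that the max-index color of $\Gamma$ is red, so that the average red index is at least $0.1$. By the definition of the index, this forces the total number of blue points contained in red chunks of $\Gamma$ to be at least $0.1\, k R$. Invoking the improved $\lambda=0$ bound of Proposition~\ref{prop:chunkcount} yields $R \geq n/(2k) - 1/2$, and the assumption $8k^2 \leq n$ absorbs the $-1/2$ correction to give, say, $R \geq n/(4k)$. Hence the total number of blue points lying inside red chunks is at least $n/40$.

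Next I would observe that each red chunk of a $(k,0)$-partition contains at most $k-1$ blue points, so the number of red chunks holding at least one blue point is at least $n/(40(k-1)) = \Omega(n/k)$. The crucial geometric remark is that every such chunk contributes an \emph{internal} color change to $P$: because a minimal $k$-chunk ends with its majority color, a red chunk containing a blue point must also contain an adjacent pair $(p_i,p_{i+1})$ with $p_i$ blue and $p_{i+1}$ red, and both of these points lie inside the chunk. Since internal color changes in distinct chunks involve disjoint adjacent index pairs in $P$, the total number of runs $t$ of $P$ satisfies $t \geq \Omega(n/k)$.

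Applying Theorem~\ref{thm:manyruns} with this $t$ produces a separated matching of size at least $n/2 + c_1 t^2/n = n/2 + \Omega(n/k^2)$, which comfortably dominates the target $n/2 + c_2 n/k^4$ after tracking the constants: note that for $k=1$ every $1$-chunk has index $0$, so the hypothesis is vacuous, and we may assume $k \geq 2$. The main obstacle I anticipate is the technical precondition $t \geq 4$ in Theorem~\ref{thm:manyruns}: for the finitely many small values of $k$ where $8k^2 \leq n$ does not immediately yield $t \geq 4$ from the run count above, one either supplements with runs coming from chunk boundaries of opposite majority, or handles those small cases by direct inspection. Beyond this, the proof is essentially constant-chasing, and the slack between $\Omega(n/k^2)$ and $c_2 n/k^4$ (with $c_2=1/12800$) leaves ample room to absorb the rounding and error terms from Proposition~\ref{prop:chunkcount}.
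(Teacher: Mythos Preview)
Your overall plan is the paper's plan: lower-bound the number of runs via mixed chunks and then apply Theorem~\ref{thm:manyruns}. The gap is in the step ``Proposition~\ref{prop:chunkcount} yields $R \ge n/(2k) - 1/2$.'' That inequality is the bound for $\max\{R,B\}$ in~(\ref{equ:chunklowerlzero}), and the hypothesis that red is the max-\emph{index} color does not give $R=\max\{R,B\}$ for a $(k,0)$-partition; the implication ``max-index colour has the larger chunk count'' is Proposition~\ref{prop:confcount} and is proved only for $k$-configurations, where every point is covered. For a $(k,0)$-partition the only bound on $R$ you can quote directly is the $\min\{R,B\}$ estimate: since the index $\alpha$ satisfies $\alpha\le (k-1)/k$, one gets $R \ge (1/k)\cdot n/(2k)-2 = \Omega(n/k^2)$, which is precisely what the paper uses.

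With this correction your argument still works, but the run count drops from $\Omega(n/k)$ to $t \ge 0.1R = \Omega(n/k^2)$, and Theorem~\ref{thm:manyruns} delivers only $n/2 + \Omega(n/k^4)$, not $n/2+\Omega(n/k^2)$. So the ``ample slack'' you advertise evaporates: the constants land exactly on $c_2=1/12800$, and the additive error terms from Proposition~\ref{prop:chunkcount} have to be absorbed carefully via $8k^2\le n$ rather than waved away. (One \emph{can} recover $R\ge n/(4k)$ by redoing the balance identity $(1-\alpha)R=(1-\beta)B+O(1)$ from the proof of Proposition~\ref{prop:chunkcount} together with $\alpha\ge\beta$ and $8k^2\le n$, but that extra argument is not in your proposal.) For the precondition $t\ge4$, the paper simply disposes of the two-run case first, where the obvious matching already has $n$ edges.
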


\begin{proof}
If $P$ has only two runs, then $P$ has a 
separated matching of size $n$, and the
theorem follows since $n/2 \geq c_2 (n/k^4)$.
Thus, assume that $P$ has at
least $4$ runs, and suppose for concreteness that the
max-index color in $\Gamma$ is
red. Let $R$ be the number of 
red chunks in $\Gamma$.
Since the
index of $\Gamma$ is
at most $(k-1)/k$, by~(\ref{equ:chunklowerlzero}),
we have 
\[
R \geq 
\frac{1}{k}  \cdot 
\frac{n}{2k} - 2 
= 
\frac{n - 4k^2}{2k^2}
\geq 
\frac{n - n/2}{k^2}
= \frac{n}{2k^2},
\]
since $n \geq 8k^2$ and hence $4k^2 \leq n/2$.
The average red index is at least $0.1$, 
and the index of a chunk is at most $1$.
Thus, there must be at least 
$0.1R \geq 0.05 (n/k^2)$ red chunks with 
positive index and hence at least 
$0.05 (n/k^2)$ chunks with at least one red point 
and one blue point.
In particular, 
$P$ has at least
$0.05(n/k^2)$ runs. We also
assumed that $P$ has at least four
runs.
Thus, by Theorem~\ref{thm:manyruns}, it follows that $P$
admits a separated matching of size 
\[
\frac{n}{2} + \frac{n^2}{400 k^4 \cdot 32 n} = 
\left(\frac{1}{2} + \frac{c_2}{k^4}\right)n.
\qedhere
\]
\end{proof}

Next, we show that if the $(k, 0)$-partition still
has a small index for 
$k = \Omega(n)$, then
we can find a large separated matching.

\begin{lemma}
\label{lem:largechunks}
Set $c_3 = 1/81$.
Let $k, n \in \N$ with
$k \leq n$ and $6480 n  \leq k^2$.
Let $P$ be
a convex bichromatic point set with
$2n$ points.
If the 
$(k, 0)$-partition 
$\Gamma$ of $P$ has
index at most  $0.1$, then $P$ admits
a separated matching of size at least $(1/2 + 
c_3 (k/n)^2)n$.
\end{lemma}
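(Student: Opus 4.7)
The plan pivots on the total number $t$ of runs of $P$, with natural threshold $t^{*} = \sqrt{c_3/c_1}\,k = \sqrt{32/81}\,k \approx 0.63\,k$.

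If $t \ge t^{*}$, then Theorem~\ref{thm:manyruns} immediately yields a separated matching with at least $n/2 + c_1 t^2 / n \ge n/2 + c_3 k^2/n$ edges, which is the claim. So assume $t < t^{*}$.

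In this small-$t$ regime, I would exploit the rigidity imposed by the low index. By Proposition~\ref{prop:chunkcount} applied to $\Gamma$ (with $\lambda = 0$ and index $\alpha \le 0.1$), there are $R, B \ge 0.9\,n/(2k) - 2$ red and blue chunks respectively, each of size at most $2k - 1$, with average minority count at most $0.1\,k$. The hypothesis $t < t^{*}$ forces both: (a) the cyclic sequence of chunk-majority colors contains few alternations, so the chunks are grouped into a bounded number of monochromatic ``super-runs''; and (b) the minority-color points within each chunk must be mostly clustered, since scattered impurities would create too many intra-chunk runs. Altogether, $P$ is nearly the union of a small number of long monochromatic arcs.

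Given this near-bipartition of $P$, a direct construction should produce a large separated matching. I would select a line $\ell$ between two super-runs of different colors, chosen so as to approximately balance red and blue counts on its two sides (such an $\ell$ exists by a continuity argument on the sweep). I then pair majority-color points of opposing super-runs via a monotone ``parallel-chord'' matching, which is automatically non-crossing. Bichromaticity then holds for all but $O(n/k)$ pairs, namely those involving minority points near super-run boundaries, yielding a separated matching of size at least $n - O(n/k)$. Since $k^2 \ge 6480\,n$, the loss $O(n/k)$ is negligible compared to the slack $n/2 - c_3 k^2/n$, and the claimed bound follows.

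The main obstacle is the small-$t$ case: one must quantify precisely how many minority points can sit away from super-run boundaries under the constraint $t < t^{*}$, and verify that the monotone matching across $\ell$ indeed achieves $n - O(n/k)$ bichromatic pairs. The threshold $t^{*}$ and the constant $c_3 = 1/81$ are tuned exactly so that these two estimates combine cleanly under the hypothesis $k^2 \ge 6480\,n$.
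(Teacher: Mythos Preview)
Your split on $t$ is natural, and the case $t \ge t^*$ via Theorem~\ref{thm:manyruns} is correct. The gap is entirely in the case $t < t^*$. Claim~(a) does not follow: the number of chunk-color alternations is at most $\min(t,\, 2n/k)$, but under $k^2 \ge 6480n$ one always has $2n/k < t^*$, so the constraint $t < t^*$ is never the binding one. A configuration of $\Theta(n/k)$ pure $k$-chunks with colors alternating R, B, R, B, \ldots\ satisfies all hypotheses (index~$0$, and $t = \Theta(n/k) < t^*$) yet has $\Theta(n/k)$ super-runs, not a bounded number. More seriously, the target $n - O(n/k)$ you aim for is far stronger than the lemma's $n/2 + c_3 k^2/n$, and nothing in the outline establishes it: the index bound $\le 0.1$ permits up to roughly $0.2n$ minority points overall, and why only $O(n/k)$ of the monotone pairs should fail to be bichromatic is precisely the crux, left as an assertion. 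The continuity sweep picks a line, but you do not specify what is being balanced or why that controls the number of bad pairs.

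The paper's proof is entirely different and does not split on $t$. It uses only the existence of a single chunk $C'$ of index $\le 0.1$ (guaranteed since the average is $\le 0.1$), truncates it to an interval $C$ with at least $0.8|C|$ red points, and grows a geometrically doubling sequence of intervals $D_0, D_1, \dots$ clockwise from $C$. At the first $D_{i^*}$ whose blue fraction exceeds $\tfrac{1}{2} + \tfrac{\delta}{20}$ (with $\delta = k/n$), the accumulated prefix $F_{i^*-1}$ is red-heavy by at least $\tfrac{\delta}{40}$; matching these red and blue surpluses across the boundary, and then greedily matching in the complement, yields $n/2 + \delta^2 n/81$. This doubling argument (adapted from Kyn\v{c}l, Pach, and T\'oth) makes essential quantitative use of a single low-index chunk as a seed---a feature absent from your sketch.
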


\begin{proof}
We adapt an argument by Kyn\v{c}l, Pach, and T\'oth~\cite[Lemma~3.1]{KynclPaTo08}.
Set $i_0 = \lceil \log(2n/k) \rceil$.
Then 
\[
k \in \left[\left\lceil \frac{2n}{2^{i_0}} \right\rceil,
\left\lceil \frac{2n}{2^{{i_0}}} \cdot 2 \right\rceil \right).
\]
Since the index of $\Gamma$ is 
at most $0.1$, there is a 
$k$-chunk 
$C'$ in 
$\Gamma$ with index at most $0.1$.
For concreteness,
assume that $C'$ is red. 
We truncate $C'$ to the
first $\lceil 2n/2^{i_0} \rceil$ elements in 
clockwise direction, 
calling the resulting
interval $C$. Since $C'$ has index at most
$0.1$, it follows that $C$ contains 
at most 
\[ 
0.1 \cdot k 
< 0.1 \cdot \left\lceil \frac{2n}{2^{i_0}}
 \cdot 2  \right\rceil
\leq 0.1 \cdot 2 \cdot
\left \lceil \frac{2n}{2^{i_0}}\right \rceil = 0.2|C|
\]
blue points and hence at least 
$0.8|C|$ red points.
We now define a sequence $D_{0}, D_{1}, \dots, D_{i_0 - 1}$ of 
pairwise disjoint intervals.
The intervals are consecutive in clockwise
order after $C$: first come the points from $C$,
then  the points from $D_{0}$, then from $D_1$, etc.
The number of points in $D_i$ is chosen to be  
\[
|D_i| = \left\lceil \frac{2n}{2^{i_0}} \cdot {2^{i + 1}} \right\rceil  - 
\left\lceil \frac{2n}{2^{i_0}} \cdot{2^i} \right\rceil.
\]
Furthermore, for $i = 0, \dots, i_0 - 1$, we set
$F_i = C \cup \bigcup_{j = 0}^{i} D_j$. 
By construction, we have
\[
|F_i| = \left\lceil \frac{2n}{2^{i_0}} \cdot 2^{i+1} \right\rceil,
\]
for  
$i = 0, \dots, i_0 - 1$.
In particular, $F_{i_0 - 1} = P$,
see Figure~\ref{fig:kynclintervals}. Note that the
size of $F_i$ and the size of $D_{i+1}$ differ
by at most $1$, where always $|F_i| \geq |D_{i + 1}|$.
\begin{figure}
    \centering
    \includegraphics{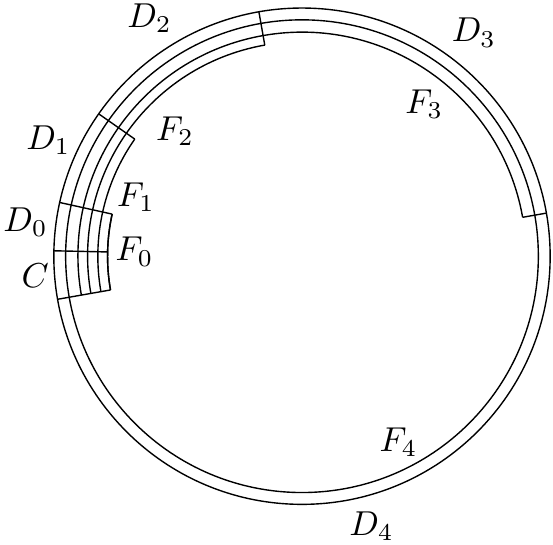}
    \caption{The partition of the convex point
    set into intervals. The size of the intervals
    roughly doubles in each step. The size of
    $F_i$ is roughly the same as the size of
    the following $D_{i+1}$.}
    \label{fig:kynclintervals}
\end{figure}
Now, set $\delta = k/n$ and let $i^*$ be the smallest $0 \leq i \leq i_0 -  1$ such that  
$D_{i}$ contains at least 
\[
 \left( \frac{1}{2} + \frac{\delta}{20} \right)|D_i|
 \]
 blue points, and set $i^* = i_0$,
if no such $i$ exists.
Suppose that $i^* \geq 1$. Then, for $j = 0 \dots, i^* - 1$, the interval
$D_j$ contains at least 
\[
 \left(\frac{1}{2} - \frac{\delta}{20} \right)|D_j|
\]
red points, and hence the total number of red points in the interval $F_{i^* - 1}$
is at least
\begin{align*}
&\phantom{=}\frac{4}{5}|C| + 
\left(\frac{1}{2} - \frac{\delta}{20}\right)  
\sum_{j =  0}^{i^* - 1} |D_j|  \\
&=\frac{4}{5}\left \lceil \frac{2n}{2^{i_0}} \right \rceil + 
\left(\frac{1}{2} - \frac{\delta}{20}\right)  
\sum_{j =  0}^{i^* - 1} \left(\left\lceil \frac{2n}{2^{i_0}} \cdot {2^{j + 1}} \right\rceil  - 
\left\lceil \frac{2n}{2^{i_0}} \cdot{2^j} \right\rceil\right).  \\
\intertext{The sum telescopes, so this is}
&=\frac{4}{5}\left \lceil \frac{2n}{2^{i_0}} \right \rceil + 
\left(\frac{1}{2} - \frac{\delta}{20}\right)  
\left(\left\lceil \frac{2n}{2^{i_0}} \cdot {2^{i^*}} \right\rceil  - 
\left\lceil \frac{2n}{2^{i_0}}\right\rceil\right)  \\
&\geq 
\frac{1}{2} \left \lceil \frac{2n}{2^{i_0}} \cdot 2^{i^*} \right \rceil +
\frac{3}{10} \left \lceil \frac{2n}{2^{i_0}} \right \rceil  - \frac{\delta}{20} 
\left \lceil \frac{2n}{2^{i_0}} \cdot 2^{i^*} \right \rceil.  \\
\intertext{Since for $a > 0$ and $b \in \N$, we
have $b\lceil a \rceil \geq \lceil b a\rceil$, and
hence $\lceil a \rceil \geq (1/b) \lceil a b \rceil$,
this is}
&\geq 
\frac{1}{2} \left \lceil \frac{2n}{2^{i_0}} \cdot 2^{i^*} \right \rceil +
\frac{3}{10} \cdot \frac{1}{2^{i^*}} \cdot \left \lceil \frac{2n}{2^{i_0}} \cdot {2^{i^*}} \right \rceil  - \frac{\delta}{20} 
\left \lceil \frac{2n}{2^{i_0}} \cdot 2^{i^*} \right \rceil.  \\
\intertext{Using that $i^* \leq i_0 = \lceil \log(2n/k)\rceil < \log(2n/k) + 1
= - \log \delta + 2$
and thus $2^{-i^*} > 2^{\log \delta - 2} = \delta/4$,
we lower bound this as}
&\geq 
\left(\frac{1}{2}  +
\frac{3}{10} \cdot \frac{\delta}{4} - \frac{\delta}{20} \right)
\left \lceil \frac{2n}{2^{i_0}} \cdot 2^{i^*} \right \rceil  \\
&=
\left(\frac{1}{2}  +
\frac{\delta}{40} \right)|F_{i^* - 1}|.
\end{align*}
It follows that $i^* \leq i_0 -1$, since $F_{i_0 - 1} = P$, and
$P$ contains only $n = (1/2)|F_{i_0 - 1}|$ red points.

Now, if $i^* = 0$, we set $J_1 = C$ and $J_2 = D_0$.
If $i^* \geq 1$, we set $J_1 = F_{i^*-1}$ and 
$J_2 = D_{i^*}$.
In this way, we obtain two adjacent intervals $J_1$ 
and $J_2$ ($J_2$ clockwise from $J_1$) such that 
the following holds:
if we write $\ell_1 = |J_1|$ and $\ell_2 = |J_2|$,
then (i) 
$\ell_1 \geq |C| > k /2$ and $\ell_2 \geq \ell_1 - 1$; and
(ii) $J_1$ contains at least $(1/2 + \delta/40)\ell_2$ 
red points and $J_2$  contains at least $(1/2 + \delta/40)\ell_2$ 
blue points. 
We match the first $(1/2 + \delta/40)\ell_2$ red points
in $J_1$, counterclockwise from the common boundary of $J_1$ and $J_2$, to the first 
$(1/2 + \delta/40)\ell_2$ blue points  in $J_2$, clockwise from the
common boundary of $J_1$ and $J_2$. 
Let $K \subseteq J_1 \cup J_2$ be the smallest interval that contains the matched
edges, and let $L$ be the complementary interval of $P$.

\begin{figure}
    \centering
    \includegraphics{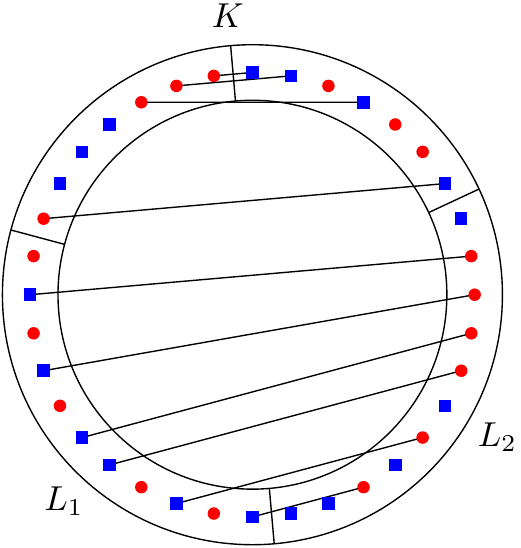}
    \caption{We match the red points in $J_1$ to the blue points
    in $J_2$ and find a large separated matching between the
    remaining points.}
    \label{fig:kynclmatching}
\end{figure}
The interval $K$ contains exactly 
$(1/2 + \delta/40)\ell_2$ red
points from $J_1$ and at most $(1/2 - \delta/40)\ell_2$
red points from $J_2$. Thus, the number of red points
in $K$ is at most $\ell_2$. Similarly, the interval
$K$ contains exactly $(1/2 + \delta/40)\ell_2$ blue
points from $J_2$ and at most $\ell_1 - (1/2 + \delta/40)\ell_2 \leq 1 + (1/2 - \delta/40)\ell_2$
blue points from  $J_1$, so the number of blue points
in $K$ is at most $\ell_2 + 1$. Setting
$m = n - \ell_2 -1$, it follows that
$L$ contains at least
$m$ red points and 
at least $m$ blue points, in particular, 
$|L| \geq 2m$. 

We partition 
$L$ into two intervals $L_1$ and $L_2$, each of size 
at least $m$. Clearly, $L_1$  contains
at least $m/2$ red points or $m/2$ blue points
(or both), and similarly for $L_2$.
Suppose that $L_1$ contains at least $m/2$
red points. If $L_1$ has less than $m/2$ blue points,
then $L_2$ must have at least $m/2$ blue points.
If $L_1$ has at least $m/2$ blue points,
then $L_1$ has at least  $m/2$ red points and
at least $m/2$ blue points, and $L_2$ has at
least $m/2$ red points or $m/2$ blue points.
In any case, it must be that 
$L_1$ contains at least $m/2$
red points and  $L_2$ contains at least $m/2$ blue points,
or vice versa. Thus, we can obtain a bichromatic matching 
between $L_1$ and $L_2$ of size 
at least $m/2$, see Figure~\ref{fig:kynclmatching}. Overall,
we get a separated matching of size at least
\begin{align*}
\left(\frac{1}{2} + \frac{\delta}{40}\right)\ell_2 + \frac{1}{2}\left(n - \ell_2 - 1\right)
&= \frac{n}{2} + \frac{\delta}{40} \cdot \ell_2 - \frac{1}{2}\\
&\geq \frac{n}{2} + \frac{\delta k}{80} - \frac{\delta}{40} - \frac{1}{2}\\
&\geq \frac{n}{2} + \frac{\delta k }{80} - 1\\
&\geq \frac{n}{2} + \frac{\delta^2n}{81},
\end{align*}
since $\ell_2 \geq \ell_1 - 1
\geq k/2-1$ and since 
\[ 
\frac{\delta k}{80} - 1
= \frac{\delta^2  n - 80}{80}
\geq \frac{\delta^2 n - (1/81)\delta^2 n}{80}
= \frac{\delta_1^2 n}{81},
\]
as $6480 n \leq k^2$, so
$80 \leq k^2/(81 n) = \delta^2 n/81$.
\end{proof}

Our goal now is to show that we can focus
on $k$-configurations with $k$ neither too small nor too large, and of index approximately $0.1$.
Here, we only sketch the argument, and we will
make it more precise below, once all the
lemmas have been stated formally: we choose $k_1 = O(1)$ and 
$k_2 = \Omega(n)$
to satisfy the previous two lemmas,
and we consider the
sequence of the $(k_1, 0)$-partition, the $(k_1, 1)$-partition,
the $(k_1, 2)$-partition, $\ldots$, up to the $(k_2, 0)$-partition
 of $P$. By Lemma~\ref{lem:smallchunks} and Lemma~\ref{lem:largechunks},
we can assume that the first partition in the sequence has index
less than $0.1$ and the last partition in the sequence has index
larger than $0.1$. Thus, at some point the index
has to jump over $0.1$. Our definition of $(k, \lambda)$-partition 
ensures that this jump is gradual.

\begin{lemma}
\label{lem:smallchange}
Let $k, n \in \N$ with $n \geq 210000k$. 
Let $P$ be a convex bichromatic point set with
$2n$ points, $n$ red and $n$ blue.
Let $\Gamma_1$ be the $(k, \lambda)$-partition
and $\Gamma_2$  the $(k, \lambda + 1)$-partition
of $P$. 
Suppose that the index of $\Gamma_1$ is 
at most $0.1$.
Then, the average
red index and the average blue index of $\Gamma_1$ and 
$\Gamma_2$ each differ by at most
$0.001$.
\end{lemma}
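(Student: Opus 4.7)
The plan is to compare $\Gamma_1$ and $\Gamma_2$ structurally, argue that only a bounded number of chunks differ, and then use the lower bounds on chunk counts from Proposition~\ref{prop:chunkcount} (which apply because the index of $\Gamma_1$ is small) to conclude that the color-averaged indices barely move.

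\emph{Structural comparison.} Assume $k$ is odd; the even case is symmetric. The counterclockwise $(k{+}3)$-chunks $D_0, \dots, D_{\lambda'_1 - 1}$ based at $p_{2n-1}$ are the same in $\Gamma_1$ and $\Gamma_2$; the only possible new one is $D_{\lambda'_1}$, of size at most $2k+5$. If $D_{\lambda'_1}$ does not exist, then $\Gamma_2 = \Gamma_1$ and the claim is trivial, so assume it does. The clockwise $k$-chunks $C_0, C_1, \dots$ are generated by the same greedy rule in both partitions (starting at $p_0$, independent of the $D$'s), and are included as long as they do not overlap with the forbidden region. Since $\Gamma_2$'s forbidden region strictly contains that of $\Gamma_1$, the $C$-chunks of $\Gamma_2$ form a prefix $C_0, \dots, C_{K_2-1}$ of those of $\Gamma_1$, $C_0, \dots, C_{K_1-1}$. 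Letting $X = C_{K_2} \cup \cdots \cup C_{K_1-1}$ be the union of removed $C$-chunks and $U_i$ the uncovered set of $\Gamma_i$, equating the two partitions of $P$ gives $X \cup U_1 = D_{\lambda'_1} \cup U_2$, all disjoint, whence $|X| \leq |D_{\lambda'_1}| + |U_2| \leq (2k+5) + (2k-2) = 4k+3$ by Proposition~\ref{prop:chunkcount}. Since each $C$-chunk has at least $k$ points, at most $\lfloor (4k+3)/k \rfloor \leq 7$ of them are removed.

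\emph{Bounding the change of averages.} Let $R_i, B_i$ be the red and blue chunk counts of $\Gamma_i$, and $S_i^R, S_i^B$ the corresponding sums of indices. The above gives $|R_2 - R_1|, |B_2 - B_1| \leq 8$, and since each chunk's index lies in $[0,1]$, also $|S_2^R - S_1^R|, |S_2^B - S_1^B| \leq 8$. Because the index of $\Gamma_1$ is at most $0.1$, the third bound of~(\ref{equ:chunklower}) yields
\[
\min\{R_1, B_1\} \;\geq\; (1-0.1)\cdot\frac{n}{7k} - 2 \;=\; \frac{9n}{70k} - 2 \;\geq\; 26998,
\]
using $n \geq 210000 k$, and therefore $R_2, B_2 \geq 26990$. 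Writing $\alpha_i = S_i^R/R_i$, the identity
\[
\alpha_2 - \alpha_1 \;=\; \frac{R_1(S_2^R - S_1^R) - S_1^R(R_2 - R_1)}{R_1 R_2} \;=\; \frac{S_2^R - S_1^R}{R_2} - \alpha_1 \cdot \frac{R_2 - R_1}{R_2}
\]
yields $|\alpha_2 - \alpha_1| \leq (|S_2^R - S_1^R| + \alpha_1\,|R_2 - R_1|)/R_2 \leq (8 + 0.1 \cdot 8)/26990 < 0.001$. The identical estimate applies to the average blue index.

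The only non-routine step is the structural comparison: verifying that the $C$-chunks of $\Gamma_2$ really are a prefix of those of $\Gamma_1$, so that the symmetric difference between the two partitions is carried by one $(k{+}3)$-chunk and a bounded tail of $k$-chunks. Once that is established, the arithmetic above is straightforward, and the constant $210000$ in the hypothesis is calibrated precisely to absorb the constants arising from the bounded combinatorial change.
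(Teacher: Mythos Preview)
Your proof is correct and follows essentially the same approach as the paper: both argue that going from $\Gamma_1$ to $\Gamma_2$ adds one $(k{+}3)$-chunk and removes a bounded number (at most seven) of $k$-chunks, then use the lower bound on the chunk count from Proposition~\ref{prop:chunkcount} (via the index $\leq 0.1$ hypothesis) to show the averages move by at most $0.001$. Your structural accounting via the set identity $X \cup U_1 = D_{\lambda'_1} \cup U_2$ is a bit cleaner than the paper's direct overlap count, and your algebraic decomposition $\alpha_2 - \alpha_1 = (S_2^R - S_1^R)/R_2 - \alpha_1 (R_2 - R_1)/R_2$ differs in form from the paper's three-sum split, but these are cosmetic variations on the same argument.
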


\begin{proof}
We bound the change of the average red
index, the argument for the average blue
index is analogous.
Let $R$ be the
number of red chunks in $\Gamma_1$,
$R'$ the number of red chunks in $\Gamma_2$, 
and let
$\alpha(C)$ denote the index of a red chunk $C$ in
$\Gamma_1$ or $\Gamma_2$.
We would like to estimate the change of the average
red index of from $\Gamma_1$ to $\Gamma_2$, i.e.,
\begin{equation}
\label{equ:indexchange}
\left| \frac{1}{R'} \sum_{C'} \alpha(C') - 
\frac{1}{R} \sum_{C} \alpha(C)\right|,
\end{equation}
where the first sum goes over all red chunks $C'$ in
$\Gamma_2$ and the second sum goes over all red chunks
$C$ in $\Gamma_1$. We have
\begin{align*}
(\ref{equ:indexchange})
&=
 \frac{\left |R\sum_{C'} \alpha(C') - 
R' \sum_{C} \alpha(C) \right |}{R \cdot R'}\\
&\leq \frac{|R - R'|}{R \cdot R'} \sum_{C_1} \alpha(C_1)
+ \frac{R}{R \cdot R'} \sum_{C_2} \alpha(C_2)
+ \frac{R'}{R \cdot R'} \sum_{C_3} \alpha(C_3),
\end{align*}
where the first sum goes over all red chunks
$C_1$ that appear in both $\Gamma_1$ and $\Gamma_2$,
the second sum goes over all red chunks $C_2$ that
appear only in $\Gamma_2$, and the third sum goes over
all red chunks $C_3$ that appear only in $\Gamma_1$.
When going from $\Gamma_1$ to  $\Gamma_2$,
we add one $(k + 3)$-chunk
and remove the $k$-chunks that overlap with it. A
$(k + 3)$-chunk has at most $2k + 5$ points, and a $k$-chunk
has at least $k \geq 1$ points. Thus, the new 
$(k + 3)$- chunk
can overlap at most seven $k$-chunks. 
This implies that $|R - R'| \leq 6$.
All indices are in $[0, 1)$.
Thus, we have $(1/R') \sum_{C_1} \alpha(C_1) \leq (1/R')
\sum_{C'} \alpha(C') \leq 1$. Moreover, 
since $\Gamma_2$ contains at most one new red
chunk, we have $\sum_{C_2} \alpha(C_2) \leq 1$, and
since $\Gamma_1$ contains at most $7$ 
red chunks that do not
appear in $\Gamma_2$, we have 
$\sum_{C_3} \alpha(C_3) \leq 7$. Thus,
\[
(\ref{equ:indexchange})
\leq \frac{6}{R} + \frac{1}{R'} + \frac{7}{R}.
\]
By (\ref{equ:chunklower}),
we have 
\[
R \geq \frac{(1 - 0.1) n}{7k} - 2 = 
\frac{9n - 140k}{70k} \geq \frac{7n}{70k}
= \frac{n}{10k},
\]
as $n \geq 210000k \geq 70k$.
In particular, $R \geq n/(10k) \geq 8$,
so $R' \geq R - 7 \geq R/8$.
Thus,
\[
(\ref{equ:indexchange})
\leq \frac{6}{R} + \frac{8}{R} + \frac{7}{R}
= \frac{21}{R} \leq \frac{210 k}{n}
\leq 0.001.
\qedhere
\]
\end{proof}

It follows that we can assume that we are dealing
with a $(k, \lambda)$-partition of 
index approximately $0.1$. Actually,
we will see that it suffices to
consider \emph{$k$-configurations}
of index $0.1$. This will be the focus of the
next section.

\subsection{Random chunk-matchings in \texorpdfstring{$k$}{k}-configurations}
\label{sec:chunkmatching}

In this section, we will focus on convex 
bichromatic point sets $P$ that admit a $k$-configuration $\Gamma$
with special properties.
Later, we will see how to reduce to this case.

Let $C_0, C_1, \dots, C_{\ell - 1}$ be the chunks 
of the $k$-configuration $\Gamma$.
We define a notion of \emph{chunk-matching}, 
as illustrated in Figures~\ref{fig:chunkmatching} 
and~\ref{fig:chunkmatching2}.
\begin{figure}
    \centering
    \includegraphics[scale=0.8]{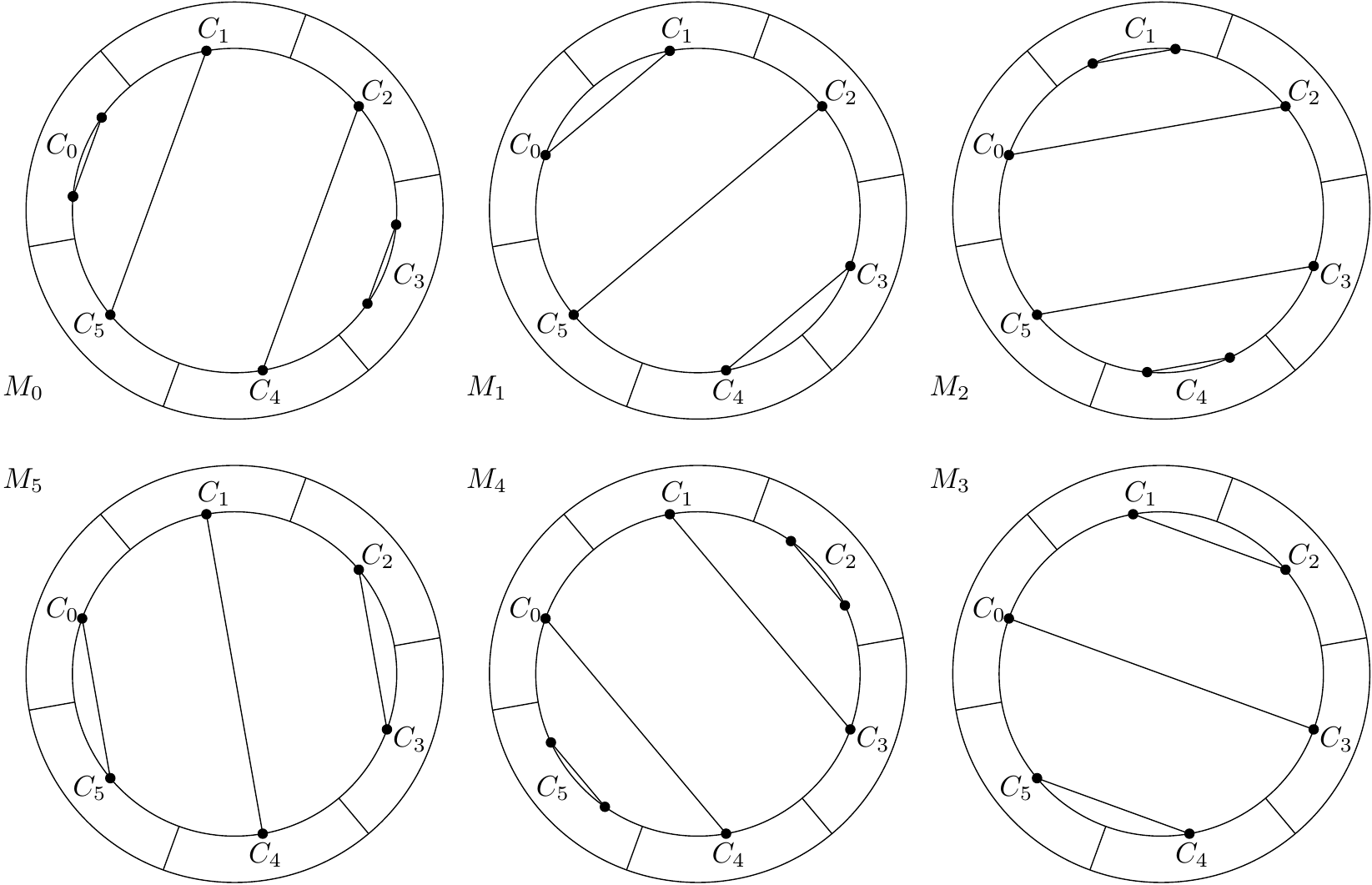}
    \caption{The six chunk matchings $M_0,  \dots,
    M_5$
    for a set of six chunks. If $i$ is even, the chunks 
    $C_{i/2}$ and $C_{i/2 + 3}$ are matched to themselves.
    If $i$ is odd, every chunk is matched to a different
    chunk.}
    \label{fig:chunkmatching}
\end{figure}
\begin{figure}
    \centering
    \includegraphics[scale=0.8]{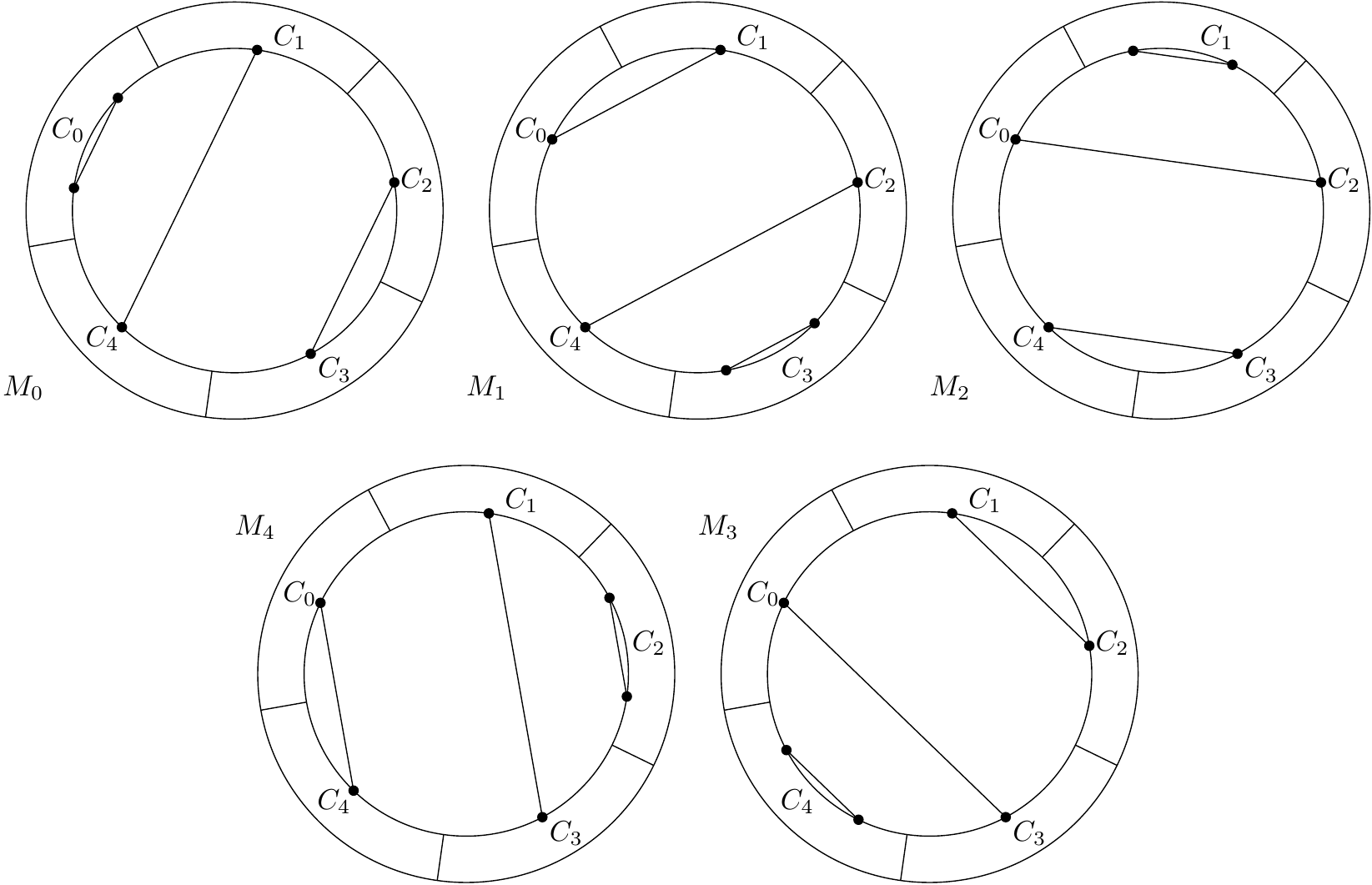}
    \caption{The five chunk matchings $M_0, M_1, \dots,
    M_4$ for a set of five chunks. In matching $M_i$,
    the chunk $C_{(3i \bmod 5)}$ is matched to itself. Every other chunk
    is matched to a different chunk.}
    \label{fig:chunkmatching2}
\end{figure}
A chunk matching pairs each of the $\ell$ chunks
with another chunk (possibly itself). Our goal
is to define chunk matchings in such a way
that we can easily derive from a chunk matching a separated
matching between the points in $P$.

Formally, we define $\ell$ matchings 
$M_0, \dots, M_{\ell - 1}$ by saying that
for $i, j = 0, \dots, \ell - 1$, the
matching $M_i$ pairs the chunks
$C_j$ and $C_{(i-j) \bmod \ell}$.
Again, refer to Figures~\ref{fig:chunkmatching}
and~\ref{fig:chunkmatching2} for examples.
The matching rule is symmetric, i.e., 
if $C_a$ is matched to $C_b$ then 
$C_b$ is matched to $C_a$.
Note that if
$j \equiv (i - j) \pmod \ell$,
the chunk $C_j$ is matched to itself
in $M_i$. If $\ell$ is even, this
happens only for even $i$, namely for
$j = i/2$ and for $j = i/2 + \ell/2$.
If $\ell$ is odd, this happens in every
matching, namely for $j \equiv (\ell + 1)i/2 \pmod \ell$.
By construction, for every $M_i$,
if we connect the matched chunks by straight 
line edges, we obtain a set of plane segments such that there is 
one line that intersects all segments.
Furthermore, every pair $C_i, C_j$ of 
chunks, $0 \leq i \leq j \leq \ell -1$ appears 
in exactly one chunk matching. In essence,
these matchings correspond to partitioning
the chunks of $\Gamma$ with a line,
where the line can possibly pass through
one or two chunks of $\Gamma$ that are
then matched to themselves.

Next, we describe how to
derive from a given chunk matching $M$ a separated
matching on $P$, see Figure~\ref{fig:chunkpairing}
for an illustration. 
We look at every two chunks $C$ and $D$
paired my $M$ (possibly, $C$ = $D$).
If $C$ is red and $D$ blue, we 
match the $k$ red points 
in $C$ to the $k$ blue points in $D$, getting $k$ 
matched edges. The case that $C$ is blue and $D$ is 
red is analogous. If $C \neq D$ and both $C$ and $D$ are red,
we could  match the $k$ red points in $C$ to the 
$b(D) < k$ blue points in $D$, or vice versa. 
We choose the option that gives more edges, 
yielding $\max\{ b(C), b(D)\}$ matched edges.
The case that $C \neq D$ and both  are blue 
is similar. Finally, suppose that $C = D$, 
and for concreteness, 
suppose that $C$ is red. In this case, we split 
the points in $C$ into two parts, containing 
$\lceil k/2 \rceil$ red points each (if $k$ 
is odd, the median point belongs to both parts).
In one part, we have at least $\lceil b(C)/2 \rceil$ 
blue points, and we match these blue points to the 
red points in the other part. This yields
$\lceil b(C)/2 \rceil \geq b(C)/2$ matched edges.
Thus, a chunk matching $M$ gives a separated matching with
at least
\begin{multline}
\label{equ:randomchunkmatching}
\frac{1}{2}
\Bigg(\sum_{\substack{(C, D) \in M \\ C\text{ red}, D \text{ red}}} \max \{b(C), b(D)\}
+
\sum_{\substack{(C, D) \in M \\ C\text{ red}, D \text{ blue}}} k\\
+
\sum_{\substack{(C, D) \in M \\ C\text{ blue}, D \text{ red}}} k
+ \sum_{\substack{(C, D) \in M \\ C\text{ blue}, D \text{ blue}}} \max \{r(C), r(D)\}
\Bigg)
\end{multline}
matched edges, where the sums go over all ordered pairs of matched chunks in $M$,
i.e., a matched pair $(C, D)$ with $C \neq D$ appears twice (which is compensated
by the leading factor of $1/2$) and a matched 
pair $(C, C)$ appears once. The next lemma shows that
a chunk matching that is chosen uniformly at random usually matches half the points of $P$.
\begin{figure}
    \centering
    \includegraphics{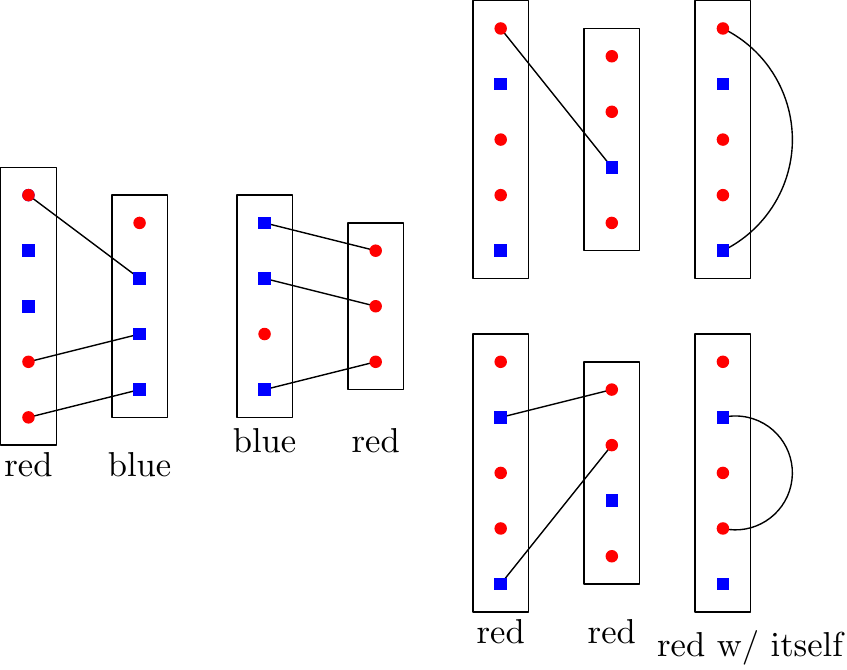}
    \caption{Going from a matched pair of
    chunks to a separated matching.
    If the two chunks have different colors,
    we can match $k$ edges. If the two
    colors are the same, there are two 
    reasonable options, matching the red 
    points in one chunk with the blue 
    points in the other chunk. We choose
    the one that matches more edges. A special
    case occurs if a chunk is matched to itself.
    In this case, we split the majority
    color into half and match between
    the halves.}
    \label{fig:chunkpairing}
\end{figure}
\begin{lemma}
\label{lem:randomchunkmatch}
Let $\Gamma$ be a $k$-configuration of $P$ and $M$
a random chunk matching in $\Gamma$.
The expected number of matched edges in 
the corresponding separated matching is at least 
$n/2$.
\end{lemma}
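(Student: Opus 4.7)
The plan is to compute the expected number of matched edges by linearity, summing the edge counts produced by the $\ell$ matchings $M_0, \dots, M_{\ell-1}$ and dividing by $\ell$. The key structural fact, stated just before the lemma, is that every unordered pair $\{C_i, C_j\}$ of chunks (including self-pairs $\{C_i, C_i\}$) occurs in exactly one of these matchings. Therefore the total edge count splits as a sum of per-pair contributions, each of which can be bounded independently.

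Let $R$ and $B$ denote the number of red and blue chunks of $\Gamma$, with average indices $\alpha$ and $\beta$. From~(\ref{equ:randomchunkmatching}), together with the trivial inequality $\max\{x, y\} \ge (x+y)/2$, a pair $\{C, D\}$ of distinct red chunks contributes at least $(b(C) + b(D))/2$; a pair of distinct blue chunks contributes at least $(r(C) + r(D))/2$; a bichromatic pair contributes exactly $k$; and a red self-pair $\{C, C\}$ contributes at least $b(C)/2$ (symmetrically for blue). Since $\sum_{C\text{ red}} b(C) = \alpha k R$ and $\sum_{C\text{ blue}} r(C) = \beta k B$ by Proposition~\ref{prop:confcount}, and the self-pair terms exactly upgrade the $(R-1)/2$-weighted sums over distinct pairs to $R/2$-weighted sums (symmetrically for blue), the total number of edges produced across all $\ell$ matchings is at least
\[
\frac{\alpha k R^2}{2} + \frac{\beta k B^2}{2} + R B k.
\]

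It remains to show this is at least $n(R + B)/2 = n \ell / 2$. From~(\ref{equ:chunkpoints}) we have $R + \beta B = n/k = B + \alpha R$, which rearranges to the useful identity $(1 - \alpha) R = (1 - \beta) B$. Calling this common value $x$, the target inequality reduces to $((1-\alpha) + (1-\beta)) R B \ge (1 - \alpha) R^2 + (1 - \beta) B^2$, where both sides equal $x(R + B)$; so it actually holds with equality. Dividing by $\ell$ gives the claimed expected lower bound.

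The only point that needs care is the pair-counting at the start, namely that each unordered pair of chunks lies in exactly one matching $M_i$, with the two parity regimes for self-pairs in the definition of the $M_i$ both handled correctly. After that the argument is a short calculation, and pleasingly the inequality is tight, so up to the rounding in the self-pair contributions the random chunk-matching produces exactly $n/2$ matched edges on average.
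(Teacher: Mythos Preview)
Your proof is correct and follows essentially the same approach as the paper's: sum the per-pair contributions from~(\ref{equ:randomchunkmatching}) over all $\ell = R+B$ matchings, lower-bound the maxima by averages, and arrive at the same intermediate quantity $\frac{\alpha kR^2 + 2kRB + \beta kB^2}{2(R+B)}$, which equals $n/2$ by~(\ref{equ:chunkpoints}). The only cosmetic differences are that you count via unordered pairs with self-pairs handled separately (the paper uses ordered pairs with a leading $1/2$, which absorbs the self-pair case automatically), and your final algebraic step routes through the identity $(1-\alpha)R = (1-\beta)B$ rather than the paper's direct regrouping $R(\alpha kR + kB) + B(\beta kB + kR) = (R+B)n$.
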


\begin{proof}
Let $R$ be the number of red chunks in $\Gamma$ and
$B$  the number of blue chunks in $\Gamma$.
Let $\alpha$ be the average index of the red chunks,
and $\beta$ the average index of the blue chunks.
We sum (\ref{equ:randomchunkmatching}) over 
all  $R + B$ possible chunk matchings and take the average. This gives the 
expected number of matched edges (the sums 
range over all ordered pairs of chunks in $\Gamma$).
\begin{align*}
&\phantom{=} \frac{1}{2(R + B)}
\Bigg(\sum_{C \text{ red}, D \text{ red}} \max \{b(C), b(D)\}
+ 2\sum_{C \text{ red}, D \text{ blue}} k \\
&\qquad + \sum_{C \text{ blue}, D \text{ blue}} \max \{r(C), r(D)\}
\Bigg)\\
\intertext{Since there are $R$ red chunks and
$B$ blue chunks, this is}
&=
\frac{1}{2(R + B)}
\left(\sum_{C \text{ red}, D \text{ red}} \max \{b(C), b(D)\}
+ 
2kRB 
+\sum_{C \text{ blue}, D \text{ blue}} \max \{r(C), r(D)\}\right)\\
\intertext{We lower bound the maximum by
the average to estimate this as}
&\geq
\frac{1}{2(R + B)}
\left(
\sum_{C \text{ red}, D \text{ red}} \frac{b(C) + b(D)}{2}
+ 2kRB +
\sum_{C \text{ blue}, D \text{ blue}} \frac{r(C) + r(D)}{2}
\right)
\tag{**}\\
\intertext{Simplifying the 
sums, this is}
&=
\frac{1}{2(R + B)}
\left(
R \sum_{C \text{ red}} b(C)
+ 2kRB +
B \sum_{C \text{ blue}} 
 r(C)\right)\\
\intertext{Since the total number of blue points
in red chunks is $\alpha kR $ and the 
total number of red points in blue chunks 
is $\beta kB$, this equals}
&=
\frac{\alpha k R^2 + 2kRB + \beta k B^2}{2(R + B)}\\
\intertext{Regrouping the terms and 
using (\ref{equ:chunkpoints}), this
becomes}
&=
\frac{R(\alpha  kR + kB) + B(\beta k B + kR)}{2(R + B)}
 =
\frac{(R + B)n}{2(R + B)}
= \frac{n}{2}.\qedhere
\end{align*}
\end{proof}

\subsection{Taking advantage of \texorpdfstring{$k$}{k}-configurations}

One inefficiency in the calculation in Lemma~\ref{lem:randomchunkmatch}
is that we bound the maximum by the average in inequality
(**). If these two quantities
often differ significantly, we can 
gain an advantage over 
Lemma~\ref{lem:randomchunkmatch}. This is made precise in
the next lemma.
\begin{lemma}
\label{lem:largeindexvariance}
Set $c_4 = 1/40$.
Let $\delta > 0$ and 
let $P$ be a convex bichromatic point set with
$2n$ points, $n$ red and $n$ blue,
and $\Gamma$ a $k$-configuration for $P$ 
with 
index at most $0.11$ that contains
at least $\delta (n/k)$ red chunks or at least
$\delta (n/k)$ blue chunks
with index at least $0.22$. 
Then, $P$ admits a separated
matching of size at least $(1/2 + c_4 \delta^2)n$.
\end{lemma}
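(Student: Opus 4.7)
The plan is to refine the computation of Lemma~\ref{lem:randomchunkmatch}, recovering the slack in the inequality labelled $(**)$, where we bounded $\max\{b(C),b(D)\}$ from below by the average $(b(C)+b(D))/2$; the loss per ordered pair is exactly $|b(C)-b(D)|/2$. When many red chunks deviate strongly above the red average and (by the pigeonhole-style balance argument below) correspondingly many lie well below it, the accumulated slack produces a gain of order $\delta^2 n$.

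Assume without loss of generality that at least $s := \lceil \delta n/k \rceil$ red chunks, forming a set $S$, have index at least $0.22$; the other case is symmetric. Let $\alpha$ be the average red index of $\Gamma$, so $\alpha \le 0.11$. I first claim that the set $T$ of red chunks with $b(C) < 0.11 k$ satisfies $|T| \ge s$. Indeed, writing $U$ for the red chunks with $0.11k \le b(C) < 0.22 k$, and using the lower bounds $b(C) \ge 0.22k$ on $S$ and $b(C) \ge 0.11k$ on $U$, we get
\[
0.22 k s + 0.11 k |U| \;\le\; \sum_{C \text{ red}} b(C) \;=\; \alpha k R \;\le\; 0.11 k\,(s + |U| + |T|),
\]
which simplifies to $|T| \ge s$. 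For every ordered pair $(C,D) \in (S\times T)\cup (T\times S)$ we have $|b(C)-b(D)| > 0.22k - 0.11k = 0.11k$, and there are $2s|T| \ge 2s^2$ such pairs, so
\[
\sum_{C, D \text{ red}} |b(C)-b(D)| \;\ge\; 2s^2 \cdot 0.11 k \;=\; 0.22\, k s^2.
\]

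Repeating the calculation of Lemma~\ref{lem:randomchunkmatch} but keeping the identity $\max\{x,y\} = \tfrac{x+y}{2} + \tfrac{|x-y|}{2}$ rather than discarding the second term in $(**)$, the expected size of the separated matching arising from a uniformly random chunk matching equals
\[
\frac{n}{2} \;+\; \frac{1}{4(R+B)}\Bigl(\sum_{C, D \text{ red}} |b(C)-b(D)| \;+\; \sum_{C, D \text{ blue}} |r(C)-r(D)|\Bigr) \;\ge\; \frac{n}{2} + \frac{0.22\, k s^2}{4(R+B)}.
\]
Because each chunk of $\Gamma$ contains at least $k$ points and the chunks partition the $2n$ points of $P$, we have $R+B \le 2n/k$. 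Substituting $s \ge \delta n/k$ gives an extra term of at least $0.22\,\delta^2 n/8 = 0.0275\,\delta^2 n > \delta^2 n/40 = c_4 \delta^2 n$, and some chunk matching therefore attains a separated matching of size at least $(1/2 + c_4\delta^2)n$. The case where it is the blue chunks that are index-rich is handled identically by swapping the roles of $R,B$ and $r,b$.

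The only real obstacle is making sure the numerical slack is enough: the $0.22$ vs.~$0.11$ gap in the hypothesis exactly matches the required $c_4=1/40$ constant once the bound $R+B \le 2n/k$ is inserted, so the constants are rather tight but do line up. All other ingredients are bookkeeping on top of Lemma~\ref{lem:randomchunkmatch}.
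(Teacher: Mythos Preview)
Your proposal is correct and follows essentially the same route as the paper's proof: both exploit the slack in inequality $(**)$ from Lemma~\ref{lem:randomchunkmatch} via $\max\{x,y\}-\tfrac{x+y}{2}=\tfrac{|x-y|}{2}$, use a Markov/balance count to show that at least $\delta n/k$ red chunks have index below $0.11$ to pair against the high-index ones, and plug in $R+B\le 2n/k$. Two cosmetic remarks: the expected matching size is \emph{at least} (not \emph{equal to}) $n/2$ plus the variance term, since the passage from the actual separated matching to~(\ref{equ:randomchunkmatching}) already involved an inequality for self-paired chunks; and in the balance inequality you should write $|S|$ rather than $s$ on both sides (the conclusion $|T|\ge|S|\ge s$ is unaffected).
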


\begin{proof}
Suppose without loss of generality that there
are at least $\delta (n/k)$
red chunks with index at least $0.2$.
Let $R$ be the number of red chunks and 
$B$  the number of blue chunks.
The
average red index of $\Gamma$ is at most $0.11$. Thus, if writing
$\gamma_1 (n/k)$ for the number of 
red chunks with index in $(0.11, 0.22)$ and $\gamma_2 (n/k) \geq \delta(n/k)$ 
for the number of red chunks with index in $[0.22, 1)$,
we have
\[
0.11 R \geq 0.11 \gamma_1 \frac{n}{k} + 0.22 \gamma_2 \frac{n}{k}  =
0.11(\gamma_1 + 2\gamma_2)\frac{n}{k}.
\]
It follows that $R \geq (\gamma_1 + 2\gamma_2) (n/k)$,
and there
must be at least 
$\gamma_2(n/k) \geq \delta(n/k)$ red chunks of index in $[0,0.11]$. 
Now, consider the following sum over all ordered pairs $(C, D)$ of red chunks,
where one chunk ($C$ or $D$) has red index at most $0.11$ and the other
chunk ($D$ or $C$) has red index at least $0.22$:
\begin{align*}
&\frac{1}{2(R + B)}
\left(
\sum_{C} \sum_{D} \max \{b(C), b(D)\} -
\frac{b(C) + b(D)}{2}\right)\\
\intertext{Since $2\max\{a,b\} - a - b = 
\max\{a,b\} - \min\{a, b\}$, for all
$a, b \in \R$, this equals}
&=
\frac{1}{4(R + B)}
\sum_{C} \sum_{D} 
(\max \{b(C), b(D)\} - \min\{b(C), b(D)\})
\\
\intertext{One chunk in each summand 
contains at least $0.22k$
blue points, the other chunk contains
at most $0.11k$ blue points, so we 
can lower bound this as}
&\geq
\frac{1}{4(R + B)}
\sum_{C} \sum_{D} 
(0.22 - 0.11)k\\
&\geq
\frac{1}{4(R + B)}
\sum_{C} \sum_{D} 
\frac{k}{10}
\geq
\frac{\delta^2 (n/k)^2}{R + B}  \frac {k} {20}
\geq \frac{\delta^2}{40} n,
\end{align*}
since we are adding 
over at least $2 \delta^2(n/k)^2$ 
ordered pairs $(C, D)$
(recall that each ordered
pair $(C, D)$ has a partner
$(D, C)$ in the
sum) and since by (\ref{equ:chunkupper}), we have
$R + B \leq 2n/k$.
Thus, comparing with (**), 
the lemma follows.
\end{proof}

Lemma~\ref{lem:largeindexvariance} shows that
we can assume that few chunks in the $k$-configuration
$\Gamma$ of $P$ have index larger than $0.22$. 
In fact, suppose now that $\Gamma$ 
contains no chunk of index
at least $0.3$ (this will be justified
below).
From now on, we will also assume that $k$ is divisible by $3$.
We subdivide each chunk in our $k$-configuration $\Gamma$
into three \emph{$(k/3)$-subchunks}. Since all $k$-chunks
have index less than $0.3$, the subchunks have
the same color as the original chunk.
Let $C$ be a $k$-chunk. The \emph{middle subchunk}
of $C$, denoted by $C_M$, is the $(k/3)$-subchunk of 
$C$ that lies in the middle of the three subchunks.
Now, we consider the middle subchunks. If the 
middle subchunks
of the max-index color contain many points of the 
min-index color, we can gain an advantage by considering
two \emph{cross-matchings} between chunks of the max-index color.

\begin{lemma}
\label{lem:uniformmiddlechunks}
Set $c_5 = 1/4$. Let $\delta > 0$ and
let $P$ be a convex bichromatic point set with
$2n$ points, $n$ red and $n$ blue. Let $\Gamma$ be a $k$-configuration for $P$ 
such that (i) $k$ is divisible by $3$;
(ii) 
every chunk in $\Gamma$ has index less than 
$0.3$; and (iii) 
the middle subchunks of the max-index color 
contain in total at least $\delta n$ points of 
the min-index color. Then 
$P$ admits a separated 
matching of size at least $(1/2 + c_5 \delta)n$.
\end{lemma}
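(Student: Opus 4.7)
The plan is to adapt the random chunk matching framework of Lemma~\ref{lem:randomchunkmatch} and enhance the conversion from a chunk matching into a separated matching, taking advantage of the blue points concentrated in the middle subchunks of the red chunks. Without loss of generality assume that red is the max-index color, so that $\sum_{C \text{ red}} b(C_M) \geq \delta n$.

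First, I would reuse the chunk matching structure from Section~\ref{sec:chunkmatching}, but for each chunk matching $M_i$ I would replace the standard procedure for producing a separated matching by an enhanced one. For each pair of red chunks $C,D$ paired in $M_i$, the enhanced procedure performs the \emph{two cross-matchings} alluded to before the lemma statement: the first is the standard one, contributing $\max\{b(C),b(D)\}$ bichromatic edges, and the second pairs blue points in $C_M$ with red points in $D_M$ and symmetrically. The anti-monotone ordering that every non-crossing matching across the separating line of $M_i$ must satisfy guarantees that middle-subchunk cross-matches can coexist with the standard edges, because the positions of $C_M$-blues and $D_M$-reds fall in the ``middle gap'' between the $C_L$- and $C_R$-endpoints used by the standard edges. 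This yields approximately $\max\{b(C),b(D)\}+\min\{b(C_M),b(D_M)\}$ bichromatic edges per red-red pair. An analogous improvement applies when a red chunk is self-paired, where the nested structure around $C_M$ lets one add edges matching blues in $C_M$ to reds in $C_L\cup C_R$ beyond the standard $\lceil b(C)/2\rceil$.

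Next, I would sum the enhanced contributions over all $R+B$ chunk matchings and divide by $R+B$, refining the calculation in the proof of Lemma~\ref{lem:randomchunkmatch}. Because each unordered pair of red chunks appears in exactly one $M_i$ and each self-pairing also in exactly one $M_i$, the total enhancement comes out (in essence) as $\sum_{\{C,D\}\text{ distinct red}}\min\{b(C_M),b(D_M)\}+\sum_C b(C_M)$. A counting argument using the hypothesis $\sum_C b(C_M)\geq \delta n$, the cap $b(C_M)<k/3$ from the definition of a $(k/3)$-subchunk, and the bound $R+B\leq 2n/k$ from~(\ref{equ:chunkupper}) should show that the total enhancement divided by $R+B$ is at least $c_5\delta n$; averaging then produces some chunk matching of size at least $(1/2+c_5\delta)n$.

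The main obstacle is twofold. Geometrically, one must verify that the cross-matched edges are non-crossing with each other and with the standard edges, and that all lie on a common separating line; this reduces to an anti-monotone position argument exploiting the fact that the blues of $C_M$ lie strictly between the reds of $C_L$ and the reds of $C_R$ in polygon order, and symmetrically for $D$. Combinatorially, the harder issue is the lower bound on the total enhancement in the extreme case where the mass $\sum_C b(C_M)\geq\delta n$ is concentrated on very few red chunks: the pairwise $\min\{b(C_M),b(D_M)\}$ contribution can then degenerate, and one has to play the self-pair term $\sum_C b(C_M)$ against the hard cap $b(C_M)<k/3$ to recover a linear-in-$\delta$ improvement of the right constant.
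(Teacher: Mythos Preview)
Your proposal has two genuine gaps, and it misses the key device the paper uses.

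\textbf{The geometric gap.} You claim that the extra edges matching blues of $C_M$ to reds of $D_M$ ``fall in the middle gap'' and therefore coexist non-crossingly with the standard $\max\{b(C),b(D)\}$ edges. This is not justified. In the standard matching you are pairing, say, the $b(D)$ blue points of $D$ with $b(D)$ red points of $C$ in anti-monotone order. But those blue points of $D$ may lie in $D_1$ or $D_3$, not only in $D_2$; an edge from a blue in $D_3$ to a red in $C$ will interleave with an edge from a blue in $C_2$ to a red in $D_2$. You cannot in general add the $C_M$--$D_M$ cross-edges on top of the full standard matching.

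\textbf{The combinatorial gap.} Even granting your edge count $\max\{b(C),b(D)\}+\min\{b(C_M),b(D_M)\}$, the term $\sum_{\{C,D\}}\min\{b(C_M),b(D_M)\}$ can collapse to zero when the mass $\sum_C b(C_M)\ge\delta n$ is carried by a single middle subchunk (which is allowed whenever $k>3\delta n$). The residual self-pair term $\sum_C b(C_M)$, once divided by $R+B$, gives only $O(\delta k)$, not $\Omega(\delta n)$. You flag this as an obstacle, but you do not resolve it, and with your edge count it does not seem resolvable with constant~$c_5$.

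\textbf{What the paper does instead.} For a red--red pair $(C,D)$ the paper does not try to augment one matching. It builds two \emph{separate} valid separated matchings:
\[
\text{(a)}\quad \text{blues of }C_1\cup C_2 \to \text{reds of }D_3,\quad \text{blues of }D_1\cup D_2 \to \text{reds of }C_3,
\]
\[
\text{(b)}\quad \text{blues of }C_2\cup C_3 \to \text{reds of }D_1,\quad \text{blues of }D_2\cup D_3 \to \text{reds of }C_1,
\]
each of which is trivially plane because all sources sit on one arc and all targets on an outer subchunk. Taking the better one and lower-bounding by the average yields at least $\tfrac{1}{2}\bigl(b(C)+b(D)+b(C_M)+b(D_M)\bigr)$ edges. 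Compared to the baseline $(b(C)+b(D))/2$ in $(\ast\ast)$, the gain is $\tfrac{1}{2}(b(C_M)+b(D_M))$, a \emph{sum} rather than a $\min$. Summing over all ordered red pairs and dividing by $R+B\le 2R$ gives at least $\tfrac{R}{2(R+B)}\sum_C b(C_M)\ge \tfrac{\delta n}{4}$, with no concentration issue at all. The ``two matchings, then average'' trick is the idea your proposal is missing.
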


\begin{proof}
Suppose that the max-index color is red.
We take a random chunk matching $M$
of $\Gamma$, and we derive a
separated matching from $M$ as described
above. However, when considering 
a pair $(C, D)$ of two red 
chunks, we proceed slightly differently.
First, suppose that $C \neq D$, and
let $C_1, C_2, C_3$ be the three subchunks
of $C$, and $D_1, D_2, D_3$ be the three subchunks
of $D$ (in clockwise order). 
We have $r(C_i) = r(D_i) = k/3$, for $i = 1, 2, 3$;
and $b(C_1) + b(C_2) + b(C_3) < k/3$ and 
$b(D_1) + b(D_2) + b(D_3) < k/3$.
\begin{figure}
    \centering
    \includegraphics{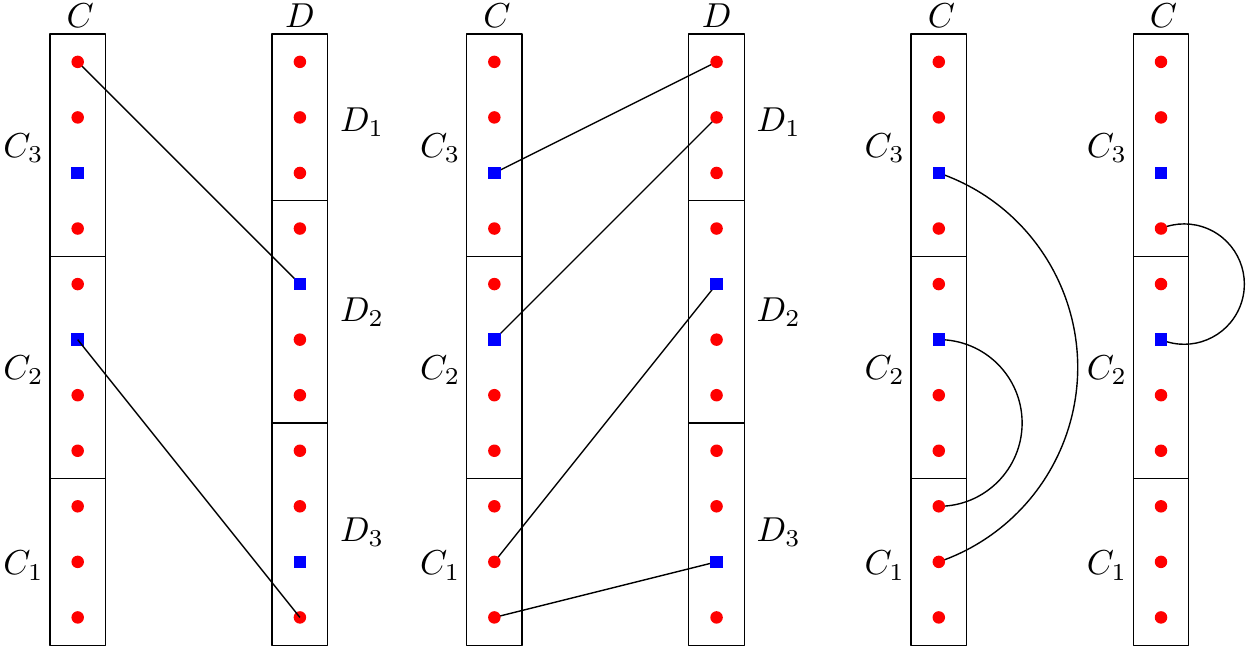}
    \caption{The two different separated matchings
    between two distinct red chunks (left) 
    and the same red chunk (right).}
    \label{fig:hardcase}
\end{figure}
We consider two separated matchings between
$C$ and $D$ (see Figure~\ref{fig:hardcase}(left): (a) match all blue points in 
$C_1$ and $C_2$ to 
red points in $D_3$ and all blue points in $D_1$ and $D_2$
to red points in $C_3$; and (b) 
match all blue points in $D_2$ and $D_3$ to 
red points in $C_1$ and all blue points in $C_2$ and $C_3$
to red points in $D_1$. 
We take the better of the two
matchings. The  number of matched edges $\text{matched}(C, D)$
is lower-bounded by the average,
so
\begin{align}
\text{matched}(C, D) &\geq \frac{1}{2}\left(b(C_1) + b(C_2) + b(D_1) + b(D_2) + 
b(D_3) + b(D_2) + b(C_2) + b(C_3)\right)\notag\\
&= \frac{1}{2}\left(b(C) + b(D)  + b(C_2) + b(D_2) \right).
\label{equ:CDlower}
\end{align}
Second, if $C = D$, we 
subdivide $C$ into 
the three subchunks 
$C_1$, $C_2$, $C_3$ with
$(C_1) = r(C_2) = r(C_3) = k/3$ and $b(C_1) + b(C_2) + b(C_3) < k/3$. Again,
we consider two different
matchings for $C$ (see Figure~\ref{fig:hardcase}(right):
(a)
match the blue points
in $C_1$ and $C_2$ to the
red points in $C_3$, and
(b) match the blue points
in $C_2$ and $C_3$ to the red 
points in $C_1$. Again,
the number of matched edges
$\text{matched}(C, C)$ is 
at least
\begin{equation}
\label{equ:matchedCClower}
    \text{matched}(C, C) \geq \frac{1}{2}
(b(C_1) + b(C_2) + b(C_2) + b(C_3)) 
= \frac{1}{2}(b(C) + b(C_2)).
\end{equation}
Now, we set $R$ to the number of red chunks and $B$
to the number of blue chunks in $\Gamma$. Then, in a random chunk matching, the expected number
of edges in the separated matchings between the pairs
$(C, D)$ of red chunks is
\begin{equation}
\label{equ:matchedAdvantage}
    \frac{1}{2(R + B)} 
\left(\sum_{C \neq D, C, D \text{ red}}
\text{matched}(C, D) +
\sum_{C \text{ red}} 2\,\text{matched}(C, C)\right).
\end{equation}
Note that in the 
first sum, each unordered pair $\{C, D\}$ of
distinct red chunks appears twice, even though
it appears once in a random chunk matching. This
is compensated by the leading factor of $1/2$,
which again leads to a coefficient of
$2$ for the expected number of edges in
the separated matching in a chunk that is
paired with itself. Using
(\ref{equ:CDlower}, \ref{equ:matchedCClower}),
we can write
\[
(\ref{equ:matchedAdvantage})
\geq 
\frac{1}{2(R + B)} 
\left(\sum_{C \text{ red}}\sum_{D \text{ red}}
\frac{
b(C) + b(D) + b(C_M) + b(D_M)}{2}\right),
\]
where we sum over all ordered pairs
$(C, D)$ of red chunks and $C_M$ and $D_M$ denote
the middle chunks of $C$ and $D$.
Now we compare with (**).
\begin{align*}
&\frac{1}{2(R + B)}
\left(
\sum_{C \text{ red}} \sum_{D \text{ red}} \frac{
b(C) + b(D) + b(C_M) + b(D_M)}{2} -
\frac{b(C) + b(D)}{2}\right)\\
&= \frac{1}{2(R + B)}
\left(
\sum_{C \text{ red}} \sum_{D \text{ red}} 
\frac{b(C_M) + b(D_M)}{2}\right)\\
\intertext{In the sum, every middle chunk $C_M$ and
every middle chunk $D_M$ appears exactly $R$ times,
and by assumption, the total number of blue points
in the red middle chunks is at least $\delta n$.
Thus, this is lower-bounded as}
&\geq \frac{1}{2(R + B)}
R \delta n
\geq \frac{1}{4R}
R \delta n
= \frac{\delta}{4}n,
\end{align*}
since red is the max-index color and hence
by Proposition~\ref{prop:confcount},
we have $B \leq R$ and
$R + B \leq 2R$. Thus,
the lemma follows.
\end{proof}

Finally, we consider the case that the middle subchunks
of the max-index color contain relatively few points.
Since the index of $\Gamma$ is relatively small, it 
means that the indices of the middle subchunks of the
max-index color have a large variance. As in
Lemma~\ref{lem:largeindexvariance}, this
leads to a large separated matching.

\begin{lemma}
\label{lem:smallmiddlechunks}
Set $\delta = 10^{-4}$ and $\eps = 10^{-5}$.
Let $P$ be a convex bichromatic point set with
$2n$ points, $n$ red and $n$ blue, and let $\Gamma$ be a $k$-configuration for $P$ 
such that (i) $k$ is divisible by $3$; 
(ii) $\Gamma$  has
index at least $0.09$; and (iii) 
every chunk in $\Gamma$ has index less than 
$0.3$. Then,
if the middle subchunks of the max-index color 
contain in total at most $\delta n$ points of 
the min-index color, 
$P$ admits a separated
matching of size at least $(1/2 + \eps)n$.
\end{lemma}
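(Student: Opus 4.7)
My plan is to repeat the random chunk-matching computation from the proof of Lemma~\ref{lem:randomchunkmatch} on the refined $(k/3)$-configuration $\Gamma'$ obtained by splitting every $k$-chunk of $\Gamma$ into its three subchunks, while retaining the slack in the inequality labeled $(**)$ there. The hypothesis that the middle subchunks of the max-index color contain at most $\delta n$ points of the min-index color translates into a large spread of the $b$-values among red subchunks of $\Gamma'$, which supplies enough gain to beat $n/2$ by $\eps n$.

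First, I check that $\Gamma'$ is well-defined. By assumption~(iii), every chunk $C$ of $\Gamma$ has index strictly less than $0.3<1/3$, so each of its three subchunks contains exactly $k/3$ points of the majority color of $C$ and strictly fewer than $k/3$ of the minority color; hence every subchunk is a valid $(k/3)$-chunk of the same color as $C$. Thus $\Gamma'$ is a $(k/3)$-configuration with $R':=3R$ red and $B':=3B$ blue subchunks, and its average red and blue indices coincide with the values $\alpha,\beta$ of $\Gamma$. I may assume without loss of generality that red is the max-index color; then $R\geq B$, $\alpha\geq 0.09$, and Proposition~\ref{prop:confcount} applied to $\Gamma$ gives $kR\geq n/2$.

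Now I rerun the computation of Lemma~\ref{lem:randomchunkmatch} on $\Gamma'$, but use $\max\{x,y\}=\tfrac{x+y}{2}+\tfrac{|x-y|}{2}$ in place of the crude bound $\max\{x,y\}\geq\tfrac{x+y}{2}$ at step $(**)$. The expected size of the separated matching produced by a uniformly random chunk matching of $\Gamma'$ is then at least
\[
\frac{n}{2}+\frac{1}{4(R'+B')}\Big(\sum_{C',D'\text{ red}}|b(C')-b(D')|+\sum_{C',D'\text{ blue}}|r(C')-r(D')|\Big),
\]
with both sums ranging over ordered pairs of subchunks of $\Gamma'$. To lower bound the red sum I partition the $3R$ red subchunks into the set $M$ of $R$ middle subchunks and the set $O$ of $2R$ outer subchunks; by hypothesis $M_b:=\sum_{C'\in M}b(C')\leq\delta n$, while $O_b:=\sum_{D'\in O}b(D')=\alpha kR-M_b$. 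Restricting to ordered pairs with one endpoint in $M$ and the other in $O$ and using $|x-y|\geq y-x$ gives
\[
\sum_{C',D'\text{ red}}|b(C')-b(D')|\;\geq\;2(|M|O_b-|O|M_b)\;=\;2R(\alpha kR-3M_b)\;\geq\;2R(\alpha kR-3\delta n).
\]

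Plugging this bound together with $R'+B'=3(R+B)\leq 6R$, $kR\geq n/2$, and $\alpha\geq 0.09$ into the slack expression, the red contribution alone is at least $(\alpha kR-3\delta n)/12\geq (0.045-3\cdot 10^{-4})n/12>10^{-5}n=\eps n$, so some chunk matching of $\Gamma'$ yields a separated matching of size at least $(1/2+\eps)n$. The main conceptual obstacle is recognizing that Lemma~\ref{lem:randomchunkmatch} conceals substantial slack as soon as the $b$-values among subchunks of a single color are unequal, and that the hypothesis ``few blue points lie in red middle subchunks'' is exactly the structural input that forces such inequality after passing to the refinement $\Gamma'$.
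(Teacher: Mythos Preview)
Your proof is correct and follows the same high-level strategy as the paper: pass to the $(k/3)$-configuration $\Gamma'$ and exploit the slack hidden in inequality~$(**)$ of Lemma~\ref{lem:randomchunkmatch}. The execution of the slack estimate, however, is genuinely different. The paper first applies Markov's inequality to the middle red subchunks to extract at least $0.5R$ subchunks of index at most $12\delta$, then separately argues from $\alpha\geq 0.09$ that at least $0.2R$ red subchunks have index at least $0.01$, and finally sums $\max\{b(C'),b(D')\}-\tfrac{b(C')+b(D')}{2}$ over ordered pairs with one endpoint in each of these two families. You instead keep the natural middle/outer partition of the red subchunks and lower-bound the full slack sum $\sum|b(C')-b(D')|$ by the signed contribution of cross-pairs, namely $2(|M|O_b-|O|M_b)=2R(\alpha kR-3M_b)$. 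This first-moment argument is cleaner: it avoids Markov entirely, uses the hypothesis $M_b\leq\delta n$ and the lower bound $\alpha kR\geq 0.045n$ directly, and in fact yields a gain of order $(0.045-3\delta)n/12\approx 3.7\cdot 10^{-3}n$, about two orders of magnitude better than the $\eps=10^{-5}$ the paper extracts. Both approaches rely on the same structural observation---that few blue points in the middle red subchunks forces a spread among the $b$-values of red subchunks---but yours converts it into a bound with less loss.
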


\begin{proof}
Suppose that the max-index color is red.
Let $R$ be the number of red chunks and $B$
be the number of blue chunks. Denote by
$\alpha \geq 0.09$ the average 
red index.

Since all chunks in $\Gamma$
have index less than $0.3$,
when considering the subchunks, we get
a $(k/3)$-configuration
$\Gamma'$ for $P$. 
We will refer to the
pieces of $\Gamma'$ 
as subchunks, to distinguish
them from the pieces
of $\Gamma$. 
Every red chunk
of $\Gamma$ is partitioned into three
red subchunks of $\Gamma'$, and every blue chunk of
$\Gamma$ is partitioned into three blue 
subchunks of $\Gamma'$.
Thus, $\Gamma'$, 
has $3R$ red subchunks,
$3B$ blue subchunks, and 
the  same
average red index and average blue index
as $\Gamma$.
By Proposition~\ref{prop:confcount},
there are $R \geq n/2k$ middle 
red subchunks.
Thus, there are at least $(n/2k) \cdot (k/3) = n/6$
red points in the middle red subchunks.
By assumption, there are at most $\delta n$ blue 
points in the 
middle red subchunks, so the average index
of the middle red subchunks is at most 
$(\delta n) / (n/6) = 6 \delta$.
By Markov's inequality, 
$\Gamma'$ contains at least 
$0.5R$ red subchunks of index
at most $12 \delta$.

On the other hand, the average red index of $\Gamma'$  is 
 $\alpha \geq 0.09$. Write 
$\gamma \cdot 3R$ for the 
number of red subchunks
with index at least $0.01$. Then,
\[
0.09 \cdot 3R \leq \gamma  \cdot 3R + 0.01 (1 - \gamma)3R
\quad \Rightarrow \quad
0.99\gamma \geq 0.08
\quad \Rightarrow \quad
\gamma \geq 0.08.
\]
Thus, there must be at least $0.08 \cdot 3 R \geq 0.2R$ red subchunks of index at least
$0.01$. 

Consider a random chunk-matching of $\Gamma'$.
and look at the sum over all pairs of red subchunks $(C,D)$ 
where one subchunk ($C$ or $D$) has red index at most 
$12 \delta_4$ and the other
subchunk ($D$ or $C$) has red index at least $0.01$. The advantage
over (**) is
\begin{align*}
&\frac{1}{6(R + B)}
\left(
\sum_{C} \sum_{D} \max \{b(C), b(D)\} -
\frac{b(C) + b(D)}{2}\right)\\
\intertext{Using again that 
$2\max\{a, b\} - a - b = \max\{a, b\} - \min\{a, b\}$,
this is}
&=
\frac{1}{6(R + B)}
\left(
\sum_{C} \sum_{D} 
\frac{\max \{b(C), b(D)\} - \min\{b(C), b(D)\}}{2}
\right)\\
\intertext{Since one chunk in $(C, B)$ contains 
at least $0.01k$ blue points and the other contains
at most $12 \delta k$ blue points, this
is lower bounded as}
&\geq
\frac{1}{12(R + B)}
\sum_{C} \sum_{D} 
(0.01k- 12 \delta k)
\\
\intertext{Since we have at least 
$2\cdot 0.2 R \cdot 0.5 R$ ordered pairs $(C,D)$ of
the desired type, this is}
&\geq \frac{0.2R^2}{12(R + B)} 
\cdot
(0.01-12\delta)k
\geq
\frac{R}{120} (0.01-12\delta)k
\geq \frac{1}{240} (0.01-12\delta) n
\geq \eps n,
\end{align*}
by our choice of $\delta$ and 
$\eps$ and 
since by Proposition~\ref{prop:confcount}, $R  + B \leq 2R$
and $R \geq n/2k$.
\end{proof}

\subsection{Putting it together}

From Theorem~\ref{thm:manyruns}, it follows
that if $P$ has at least four runs, 
there is always a separated matching with 
strictly more than $n/2$ edges. Moreover, 
if $P$ has two runs, then $P$ has 
a separated matching with $n > n/2$ edges.
Therefore, the following theorem implies
Theorem~\ref{thm:main}.

\begin{theorem}
\label{thm:puttingittogether}
There exist constants $\eps_* > 0$ and $n_0 \in \N$
with the following property:
let $P$ be a convex bichromatic point set with $2n \geq 2n_0$ points, $n$ red and $n$ blue.
Then, $P$ admits a separated matching on at least $(1 + \eps_*)n$ vertices.
\end{theorem}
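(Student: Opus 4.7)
The plan is to thread together all the tools developed in the preceding subsections: pin down a $(k, \lambda)$-partition with index very close to $0.1$, pass from it to a $k$-configuration of comparable index, and then apply one of Lemmas~\ref{lem:largeindexvariance}, \ref{lem:uniformmiddlechunks}, \ref{lem:smallmiddlechunks} depending on the fine structure of the chunks. First, I would dispose of degenerate situations. If $P$ has at most $3$ runs, then either two runs give a matching of size $n$, or we simply argue directly. If $P$ has at least four runs but still of order $\omega(\sqrt{n})$, Theorem~\ref{thm:manyruns} already yields a separated matching of size $(1/2 + \Omega(1))n$. So we may assume the number of runs is bounded polynomially below $\sqrt{n}$.

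Next, fix a small constant $k_1$ divisible by $3$, large enough so that Lemma~\ref{lem:smallchunks} applies for $n \geq n_0$, and a threshold $k_2$ divisible by $3$ with $k_2 = \Theta(\sqrt{n})$ so that Lemma~\ref{lem:largechunks} applies. Consider the sequence of partitions obtained by starting at the $(k_1, 0)$-partition and successively incrementing $\lambda$ until all $k_1$-chunks have been replaced by $(k_1{+}3)$-chunks, at which point the partition coincides (up to a small boundary effect) with a $(k_1{+}3, 0)$-partition; then repeat with $k_1 \!+\! 3$ in place of $k_1$, and so on until we reach the $(k_2, 0)$-partition. By Lemma~\ref{lem:smallchunks}, either the $(k_1, 0)$-partition already has index at least $0.1$ and we are done, or its index is strictly below $0.1$. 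Similarly, Lemma~\ref{lem:largechunks} lets us assume the $(k_2, 0)$-partition has index above $0.1$. Since Lemma~\ref{lem:smallchange} guarantees that the average red and blue indices change by at most $0.001$ between consecutive members of the sequence, by a discrete intermediate-value argument there is some partition $\Gamma$ (with $k_1 \leq k \leq k_2$, $k$ divisible by $3$, and some $\lambda$) whose index lies in $[0.099, 0.101]$.

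The next step is to turn $\Gamma$ into a genuine $k$-configuration $\Gamma^\star$ with index still close to $0.1$. By Proposition~\ref{prop:chunkcount} there are at most $2k - 2$ uncovered points, which we can absorb by enlarging adjacent chunks, and the few $(k{+}3)$-chunks introduced by $\lambda \geq 1$ can similarly be blended in; since we have assumed $k \leq k_2 = O(\sqrt{n})$, this affects only an $O(k/n)$ fraction of the chunks and therefore perturbs the index by $o(1)$. Now distinguish two cases based on the presence of high-index chunks in $\Gamma^\star$: if at least $\delta(n/k)$ red chunks or blue chunks have index $\geq 0.22$, Lemma~\ref{lem:largeindexvariance} directly delivers a separated matching of size $(1/2 + \Omega(1))n$. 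Otherwise, the number of chunks of index $\geq 0.3$ is at most $\delta(n/k)$, and we split each such rogue chunk into a majority-color block and a small remainder that is merged with its neighbor; the resulting configuration $\Gamma^{\star\star}$ has all chunks of index $< 0.3$ and index still in a small neighborhood of $0.1$, in particular at least $0.09$.

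Finally, look at the middle subchunks of the max-index color in $\Gamma^{\star\star}$. If they contain together at least $\delta n$ points of the min-index color, Lemma~\ref{lem:uniformmiddlechunks} concludes; otherwise Lemma~\ref{lem:smallmiddlechunks} applies since the hypothesis on the total index and the upper bound $0.3$ on chunk indices are both in force. Either way we obtain a separated matching of size $(1/2 + \eps)n$ for an absolute $\eps > 0$, which in turn gives the $(1 + \eps_*)n$-vertex bound claimed in the theorem with $\eps_* = 2\eps$. The main obstacle I foresee is the bookkeeping in the middle stage: passing from a $(k, \lambda)$-partition (with uncovered points and mixed chunk sizes) to a clean $k$-configuration of index close to $0.1$ without allowing the small number of rogue high-index chunks to spoil the hypotheses of the last two lemmas. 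Controlling the index to within the narrow window $[0.09, 0.11]$ required by Lemmas~\ref{lem:largeindexvariance}--\ref{lem:smallmiddlechunks} while simultaneously achieving a chunk-size divisible by $3$ and an everywhere-bounded chunk index is where the careful choice of constants $k_1$, $k_2$, $\delta$, and $\eps$ really matters.
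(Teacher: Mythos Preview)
Your overall architecture matches the paper's, but there is one genuine gap: your choice $k_2 = \Theta(\sqrt{n})$ is too small. Lemma~\ref{lem:largechunks} promises a separated matching of size $(1/2 + c_3(k/n)^2)n$, so with $k_2 \asymp \sqrt{n}$ you only get $n/2 + O(1)$, not $(1/2 + \Omega(1))n$; hence you cannot legitimately ``assume the $(k_2,0)$-partition has index above $0.1$''. The paper takes $k_2$ linear in $n$ (concretely $k_2 \in [10^{-4}\eps^3 n,\,10^{-3}\eps^3 n]$), small enough that Lemma~\ref{lem:smallchange} (which needs $n \geq 210000k$) still applies along the whole interpolation, yet large enough that $(k_2/n)^2$ is a fixed positive constant.

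A second point worth flagging is the passage from the $(k_*,\lambda_*)$-partition to a $k_*$-configuration. Your plan to ``absorb'' uncovered points into adjacent chunks and ``blend in'' the $(k_*{+}3)$-chunks is delicate, because a $k$-configuration requires every chunk to have \emph{exactly} $k$ points of its majority color; enlarging chunks generally breaks this, and with $k_*$ possibly as large as $\Theta(n)$ there may be $\Theta(n/k_*)$ many $(k_*{+}3)$-chunks, each needing adjustment. The paper sidesteps this entirely: it \emph{rearranges the point set itself}, removing the $O(\eps^3 n)$ offending points (uncovered points plus three per $(k_*{+}3)$-chunk), repacking them into fresh pure chunks, and distributing the leftover into existing chunks. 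This produces a modified set $P_2$ with a clean $k_*$-configuration of index in $[0.097,0.103]$; a second rearrangement to $P_3$ kills all chunks of index $\geq 0.22$. The final matching is found on $P_3$, and then one simply deletes the $O(\eps n)$ edges touching moved points to get a matching on the original $P$. You correctly identify this bookkeeping as the main obstacle; the resolution is to perturb $P$, not $\Gamma$.
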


\begin{proof}
Set $n_0 = 10^{100}$
and $\eps  = 10^{-5}$,
as in 
Lemma~\ref{lem:smallmiddlechunks}.
Let $k_1$ the smallest integer larger than 
$10^3 \eps^{-3} = 10^{18}$ that
is divisible by $3$. 
Since $n \geq 10^{100} \geq 8k_1^2$,
Lemma~\ref{lem:smallchunks} 
shows that
if the $(k_1, 0)$-partition $\Gamma_1$  of $P$ has index at least
$0.1$, the theorem follows
with $\eps_* = \Omega(1/k_1^4)
= \Omega(1)$.
Thus, we may assume the 
following claim:
\begin{claim}
The $(k_1, 0)$-partition $\Gamma_1$
of $P$ has index less than $0.1$,
where $k_1$ is a fixed constant
with $k_1 \geq 10^3\eps^{-3}
= 10^{18}$.
\end{claim}
Next,
let $k_2$ be the
largest integer in the
interval $[10^{-4} \eps^3 n, 10^{-3} \eps^3 n]$ that is divisible by $3$. 
Since $n \geq 10^{100}$,
it follows that $k_2$ exists.
Furthermore, since 
$n \geq k_2$ and 
$6480n \leq 10^{-8}\eps^6 n^2 \leq k_2^2$,
Lemma~\ref{lem:largechunks}
implies that
if the $(k_2, 0)$-partition $\Gamma_2$ 
of $P$ has index at most $0.1$, the theorem follows with
$\eps_* = \Omega((k_2/n)^2) = \Omega(1)$.
Hence, we may assume the 
following claim:
\begin{claim}
The $(k_2, 0)$-partition $\Gamma_2$
of $P$ has index more than $0.1$,
where $k_2$ is  the
largest integer in the
interval $[10^{-4} \eps^3 n, 10^{-3} \eps^3 n]$ that is divisible by $3$.
\end{claim}

We now interpolate between $\Gamma_1$ and $\Gamma_2$.
Consider the sequence of $(k, \lambda)$-partitions of $P$ for the
parameter pairs 
\begin{multline*}
(k_1, 0), (k_1, 1),
\dots, (k_1, \lambda(k_1)), (k_1 + 3, 0), (k_1 + 3, 1), 
\dots, \\(k_1 + 3, \lambda(k_1 + 3)), (k_1 + 6, 0), \dots, (k_2, 0),
\end{multline*}
where $\lambda(k)$ denotes the largest $\lambda$ for which
the $(k, \lambda)$-partition of $P$ still contains a $k$-chunk.
Let $(k_*, \lambda_*)$ be the first parameter pair 
for which the index of the $(k_*, \lambda_*)$-partition $\Gamma_3$ 
of $P$ is
larger than $0.1$. This parameter pair exists,
because $(k_2, 0)$ is a candidate.
\begin{claim}
The $(k_*, \lambda_*)$-partition $\Gamma_3$
of $P$ has index in $[0.1, 0.101]$.
Here, $k_*$ is divisible 
by $3$ and lies in the interval
$[10^3\eps^{-3}, 
10^{-3}\eps^3n ]$.
\end{claim}

\begin{proof}
The claim on $k_*$ and the
fact that $\Gamma_3$ has
index at least $0.1$ follow
by construction.
Furthermore, let $(k_{**}, \lambda_{**})$ be such that
$\Gamma_3$ is the 
$(k_{**}, \lambda_{**} + 1)$
partition of $P$ (we either
have $k_{**} = k_*$ and $\lambda_{**} = \lambda_* - 1$; or $k_{**} = k_* - 1$ and $\lambda_{**} = \lambda(k_{**})$). Since
$210000 k_{**} \leq 10^6 \cdot 10^{-3} \eps^{3} n \leq n$, 
Lemma~\ref{lem:smallchange} implies
that the index of $\Gamma_3$ is at most
$0.101$.
\end{proof}

We rearrange $P$ to turn $\Gamma_3$ into
a $k_*$-configuration $\Gamma_4$ of 
a closely related point set 
$P_2$. 

\begin{claim}
\label{clm:Gamma4}
There exists a convex bichromatic
point set $P_2$ with $2n$ 
points, $n$ red and $n$ blue, and a $k_*$-configuration $\Gamma_4$
of $P_2$ such that 
(i) $P_2$ differs from $P$ in
at most $10^{-1} \eps^3 n$ points; and
(ii) the index of $\Gamma_4$ 
lies in $[0.097, 0.103]$.
\end{claim}
\begin{proof}
We remove from $P$ all the 
uncovered points of $\Gamma_3$ as well as $3$ points of the majority color
from each $(k_* + 3)$-chunk of $\Gamma_3$ (and, if necessary, up to $3$ points of the minority color, to keep chunk structure valid).
If we consider a single red
$(k_* + 3)$-chunk $C$ and denote
the original number of blue points
in $C$ by $b(C)$ and the resulting number of blue 
points by $b'(C)$, then
the index of $C$ changes by at most 
\[ 
\left |\frac{b(C)}{k_* + 3} -
\frac{b'(C)}{k_*}\right|
= 
\left|\frac{k_*b(C) - (k_* + 3)b'(C)}
{k_*(k_* + 3)}\right|
\leq
\frac{|b(C) - b'(C)|}{k_* + 3} +
\frac{3b'(C)}{k_*(k_* + 3)}
\leq \frac{6}{k_* + 3},
\]
since $|b(C) - b(C')| \leq 3$ 
and $b'(C) \leq k_*$. A similar bound holds for a blue 
$(k_* + 3)$-chunk.

By (\ref{equ:chunkupper}), 
there are at most $2n/k_* \leq 2 \cdot 10^{-3} \eps^3 n$ many $(k_* + 3)$-chunks, 
and by Proposition~\ref{prop:chunkcount}, 
there at most $2k_* - 1 \leq 2 \cdot 10^{-3}\cdot \eps^3 n$ uncovered points, 
so in total we remove at most $14 \cdot 10^{-3} \eps^3 n \leq 10^{-1} \eps^3 n$ points.
We arrange these points into as many pure chunks of
$k_*$ red points or of $k_*$ blue points as possible.
This creates at most $10^{-1}  \eps^3 (n / k_*)$
new $k^*$-chunks, all of 
which have index $0$.
Now, less than $k_*$ red points and less than $k_*$ blue
points remain. 
By (\ref{equ:chunklower}), there are at 
least 
\[
(1 - 0.101) \frac{n}{7k_*} - 2 
\geq 10^{-1} \cdot 10^3 \eps^{-3} - 2 
\geq 10^3
\]
chunks of
each color in $\Gamma_3$.
Thus,
we can partition the remaining red points into at most $10^3$ groups 
of size at most $10^{-3} k_*$ and add each group to a
single blue chunk; 
and similarly for the remaining blue points. This changes the
index of each chunk by at most 
$10^{-3}$.

We call the resulting rearranged point set $P_2$ and the resulting
$k_*$-configuration $\Gamma_4$. As mentioned,
$P_2$ was obtained from $P$ by moving at most $10^{-1}
\cdot \eps^3 n$ points.
We change the 
index of any existing chunk by at most 
$6/(k^* + 3) + 10^{-3} \leq 2 \cdot 10^{-3}$. 
Furthermore, we create at 
most $10^{-1}  \eps^3 (n/k_*)$ new $k_*$-chunks (all of index $0$) and by (\ref{equ:chunklower}),
we have at least 
$(1-0.101)n/(7k_*) - 2 \geq 
(10^{-1} - 10^{-2} \cdot \eps^3)(n/k_*)$ original chunks of each color in $\Gamma_3$.
Thus, if we denote by $\alpha$
the average index of the
existing red chunks after
the rearrangement,
by $R$ the number of existing 
red chunks, and by $R'$ the 
number of new red chunks,
the average red index of $\Gamma_4$
can differ from $\alpha$ by at most
\[
\alpha - \frac{R}{R + R'} \alpha
= \alpha \frac{R'}{R + R'}
\leq \alpha \frac{R'}{R}
\leq 0.102 \frac{10^{-1} \eps^3}
{10^{-1} - 10^{-2} \eps^3}
\leq 10^{-3},
\]
and similarly for the average
blue index of $\Gamma_5$.
It follows that $\Gamma_4$ has
index in $[0.097, 0.103]$.
\end{proof}

Now, using Lemma~\ref{lem:largeindexvariance} with
$\delta = 10^{-1}\eps$, we get that if the $k^*$-configuration $\Gamma_4$ contains 
at least $\delta(n/k_*)$ red chunks or at least $\delta(n/k_*)$
blue chunks with index at least $0.22$, then the rearranged point
set $P_2$ admits a separated matching of size 
at least 
\[
\left(\frac{1}{2} + \frac{1}{40}
\cdot 10^{-2} \eps^2\right)n 
\geq \left(\frac{1}{2} + 10^{-4} \cdot \eps^2\right)n.
\]
By Claim~\ref{clm:Gamma4},
$P_2$ differs from $P$
by at 
most $10^{-1} \eps^3 n$ points. 
Since
$\eps  = 10^{-5}$, 
it follows that after deleting 
all matching edges incident to  a rearranged point,
we obtain the theorem. 
Thus, we may assume the
following claim:
\begin{claim}
At most $10^{-1}\cdot \eps(n/k_*)$ red chunks and at most
$10^{-1} \cdot \eps(n/k_*)$ blue chunks in $\Gamma_4$ have index
more than $0.22$.
\end{claim}

We again rearrange the point set $P_2$ to
obtain a point set $P_3$ and a $k^*$-configuration $\Gamma_5$
for $P_3$ such that every $k^*$-chunk in $\Gamma_5$ has
index less than $0.3$.
\begin{claim}
\label{clm:Gamma5}
There exists a convex bichromatic
point set $P_3$ with $2n$ 
points, $n$ red and $n$ blue, and a $k_*$-configuration $\Gamma_5$
of $P_3$ such that 
(i) $P_3$ differs from $P_2$ in
at most $2 \cdot 10^{-1} \eps n$ points; 
(ii) the index of $\Gamma_5$ 
is at least $0.096$; 
(iii) all chunks in $\Gamma_5$
have index less than $0.3$; and
(iv) $k_*$ is divisible by $3$.
\end{claim}
\begin{proof}
We remove all the blue points from red chunks of index at 
least $0.22$ and all the red points from all blue chunks
of index at least $0.22$. These are at most 
$2 \cdot 10^{-1} \cdot \eps n$ 
points in total. 
By removing these points,
we decrease the index of
at most $10^{-1} \eps (n/k_*)$  existing chunks of each color to $0$.
By Proposition~\ref{prop:confcount},
there are at least
\begin{equation}
\label{equ:exchunks}
(1 - 0.103) \frac{n}{2k_*} \geq 10^{-1} \cdot \frac{n}{k_*}
\end{equation}
existing chunks of each
color, so this step
decreases the average index
by at most $\eps$.

We rearrange the deleted points into as 
many pure chunks with $k_*$ red points or with 
$k_*$ blue points as possible. Less than $k_*$ red points 
and  less than $k_*$ blue points remain. By~(\ref{equ:exchunks}),
there are at least 
$10^{-1}(n/k_*) \geq 10^3$
chunks of each color,
so  we group the remaining
points into blocks of size $10^{-3}\cdot k_*$ 
and distribute the blocks over the
existing red and blue chunks.
This increases the average index
of the existing chunks
by at most $10^{-3}$.

Finally, we create at most 
$10^{-1} \cdot \eps (n/k_*)$ new chunks of each color (all with index $0$), and the existing number of 
chunks of the max-index color of $\Gamma_4$ is at least $n/2k_*$,
by Proposition~\ref{prop:confcount}.
Suppose for concreteness that
the max-index color of $\Gamma_4$ 
is red, and let $R$ be the
number of existing red chunks,
$R'$  the number of new red chunks,
and 
$\alpha$  the average
index of the existing red chunks
after the rearrangement.
Then, the average red index 
after the rearrangement differs
from $\alpha$ be at most
\[
\alpha - \frac{R}{R + R'} \alpha 
\leq \alpha \frac{R'}{R}
\leq 0.104 \cdot  \frac{  10^{-1} \eps}{1/2} \leq \eps.
\] 
Thus, the red index in 
the resulting $k_*$-configuration 
$\Gamma_5$ is
at least $0.097 - 2\eps \geq 0.096$. This implies that the index of $\Gamma_5$ is
at least $0.096$.
\end{proof}

Now, we consider 
the $k^*$-configuration
$\Gamma_5$. By Lemma~\ref{lem:smallmiddlechunks},
if in $\Gamma_5$ the middle-chunks of the max-index color
contain in total at most $10^{-4} n$ points of the min-index color,
we get a separated matching for $P_3$ of size
at least $(1/2 + \eps)n$. By deleting all the matching edges
that are incident to the at most $2 \cdot 10^{-1} \eps n + 10^{-1} \eps^3 n
\leq 0.3 \eps n$ points that were moved to obtain $P_3$
from $P$, the theorem follows. Similarly, if in $\Gamma_5$ the
middle-chunks of the max-index color contain in total 
 more than $10^{-4} n$ points of the min-index color,
by Lemma~\ref{lem:uniformmiddlechunks},
we get a separated matching for $P_3$ of size at least
$(1/2 + 10^4/4)n \geq (1/2 + \eps)n$.
Again, we obtain the theorem after deleting edges that are incident to 
the rearranged points.
\end{proof}

\section{Existence of large separated monochromatic matchings}
We outline the proof of Theorem~\ref{thm:main-monochromatic}.
This goes in two steps. First, we consider the case
that $P$ has the same number of red and blue 
points, and we derive a counterpart to 
Theorem~\ref{thm:puttingittogether} for it.
The main ideas are the same as
for the proof of Theorem~\ref{thm:main}.
Then, we show how this can be extended to the
case that the number of red and blue points differs.

\subsection{The balanced case}

First, we suppose that the number of
red points and the number of  blue points in $P$
is exactly $n$.
We again consider $k$-chunks as in
Section~\ref{sec:chunks}, and
we use
random chunk-matchings as
explained in Section~\ref{sec:chunkmatching}. Suppose that $k$ is divisible
by $2$.
We derive  a separated 
monochromatic matching 
from a chunk matching $M$
as follows.
Suppose two chunks $C$ and $D$ are matched in $M$. If $C=D$, we find $k/2$ pairwise disjoint edges with endpoints in 
the same (major) color. Now suppose that $C \neq D$. If $C$ and 
$D$ are both blue  or both red, we  
take $k$ pairwise disjoint edges between them, using 
points of their major color. If, say, $C$ is red and 
$D$ is blue,  we  may either take $b(C)$ blue edges or 
$r(D)$ red edges that are pairwise disjoint and 
connect points of $C$ with points of $D$. Thus, we obtain 
$\max\{b(C),r(D)\}$ edges between $C$ and $D$.
Similarly to 
(\ref{equ:randomchunkmatching}), this gives a separated monochromatic
matching with
\begin{multline}
\label{equ:randomchunkmatching-monochr}
\frac{1}{2}
\Bigg(\sum_{\substack{(C, D) \in M \\ C\text{ red}, D \text{ red}}} k
+
\sum_{\substack{(C, D) \in M \\ C\text{ red}, D \text{ blue}}} 
\max\{b(C), r(D)\}\\
+
\sum_{\substack{(C, D) \in M \\ C\text{ blue}, D \text{ red}}} \max \{r(C), b(D)\}
+
\sum_{\substack{(C, D) \in M \\ C\text{ blue}, D \text{ blue}}}  k
\Bigg)
\end{multline}
edges, where the sums go over all ordered pairs of matched chunks in $M$,
i.e., a matched pair $(C, D)$ with $C \neq D$ appears twice (which is compensated
by the leading factor of $1/2$) and a matched 
pair $(C, C)$ appears once.
The following lemma is analgous
to Lemma~\ref{lem:randomchunkmatch}.

\begin{lemma}
\label{lem:randomchunkmatch-monochromatic}
Let $k$ be even, and
let $\Gamma$ be a $k$-configuration in $P$. 
Let $M$ be  a random chunk matching $M$ in $\Gamma$.
The expected number of 
edges in 
the corresponding separated monochromatic matching is at least 
$n/2$.
\end{lemma}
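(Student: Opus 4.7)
The plan is to mirror the proof of Lemma~\ref{lem:randomchunkmatch} almost verbatim, using the symmetry between the bichromatic and monochromatic constructions. Observe that in equation (\ref{equ:randomchunkmatching-monochr}) the roles of same-color and cross-color matched pairs are interchanged compared to (\ref{equ:randomchunkmatching}): the same-color pairs now contribute $k$ edges automatically (one for each point of the major color of the two chunks), while the cross-color pairs contribute $\max\{b(C),r(D)\}$ (respectively $\max\{r(C),b(D)\}$). This swap is exactly what the averaging argument needs, so no new idea beyond Lemma~\ref{lem:randomchunkmatch} is required.

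First, I sum (\ref{equ:randomchunkmatching-monochr}) over all $R+B$ chunk matchings and divide by $R+B$ to obtain the expected size of the separated monochromatic matching. Rewriting the sums as running over all ordered pairs $(C,D)$ of chunks of $\Gamma$, the two same-color contributions aggregate cleanly to $kR^2$ and $kB^2$ respectively. For the cross-color contributions I bound each $\max$ term from below by the corresponding arithmetic mean $(b(C)+r(D))/2$ or $(r(C)+b(D))/2$, mirroring the step marked $(**)$ in the proof of Lemma~\ref{lem:randomchunkmatch}. Collecting like terms, and using that the total number of blue points contained in red chunks equals $\alpha k R$ while the total number of red points contained in blue chunks equals $\beta k B$, the expectation reduces to
$\frac{1}{2(R+B)}\bigl(kR^2 + (\alpha+\beta)kRB + kB^2\bigr)$.

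Finally, I regroup this expression as $\frac{1}{2(R+B)}\bigl(R(kR+\beta k B) + B(\alpha k R + k B)\bigr)$ and apply the identity (\ref{equ:chunkpoints}) in the form $n = kR+\beta k B = kB + \alpha kR$, so that each parenthesized factor equals $n$ and the whole expression collapses to $(R+B)n/(2(R+B)) = n/2$, which is the desired bound. There is no substantive obstacle: the calculation is a direct transposition of Lemma~\ref{lem:randomchunkmatch}, and the hypothesis that $k$ is even enters only at the very first step, when a chunk matched to itself needs to be split into $k/2$ pairwise disjoint monochromatic edges of its majority color; it plays no role in the averaging computation itself.
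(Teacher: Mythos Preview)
Your proposal is correct and follows essentially the same argument as the paper's own proof: averaging (\ref{equ:randomchunkmatching-monochr}) over all $R+B$ chunk matchings, bounding each cross-color $\max$ term by the arithmetic mean, collecting the totals into $kR^2+(\alpha+\beta)kRB+kB^2$, and then regrouping so that (\ref{equ:chunkpoints}) applies. Your remark on where the parity assumption on $k$ is actually used is also accurate.
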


\begin{proof}
Let $R$ be the number of red chunks in $\Gamma$ and
let $B$ be the number of blue chunks in $\Gamma$.
Let $\alpha$ be the average index of the red chunks,
and let $\beta$ be the average index of the blue
chunks.
We sum (\ref{equ:randomchunkmatching-monochr}) over 
all  $R + B$ possible chunk matchings and take the average. We get
that the expected number of matched edges is at least (the sums 
range over all ordered pairs of chunks in $\Gamma$)
\begin{align*}
&\phantom{=} \frac{1}{2(R + B)}
\Bigg(\sum_{C \text{ red}, D \text{ red}} k
+
\sum_{C \text{ red}, D \text{ blue}} \max \{b(C), r(D)\}\\
&\ \ +
\sum_{C \text{ blue}, D \text{ red}} \max \{r(C), b(D)\}
+ \sum_{C \text{ blue}, D \text{ blue}}  k  \Bigg). \\
\intertext{There are
$R^2$ pairs of red chunks and
$B^2$ pairs of blue chunks,
and the maximum can be lower bounded
by the average, so this is}
&\geq \frac{1}{2(R + B)}
\Bigg( 
kR^2 
 +
\sum_{C \text{ red}, D \text{ blue}} \frac{b(C) + r(D)}{2}
+ \sum_{C \text{ blue}, D \text{ red}} \frac{r(C) + b(D)}{2} 
+ kB^2\Bigg)
\tag{***}\\
\intertext{Simplifying the sums,
this becomes}
&=
\frac{1}{2(R + B)} 
\Bigg(kR^2 + 
B \sum_{C \text{ red}}
  b(D) 
+ R \sum_{C \text{ blue}} r(C) 
+kB^2\Bigg) \\
\intertext{Since there are
$\alpha k R$ blue points
in the red chunks, and
$\beta b B$ red points in the 
blue chunks, this is}
&=
\frac{kR^2 + 
\alpha k BR   
+ \beta k BR  + k B^2}{2(R + B)} \\
\intertext{Regrouping the
terms and using (\ref{equ:chunkpoints}), this
equals}
&=
\frac{R(kR + \beta k B) +
B(kB  + \alpha kR)}{2(R + B)} 
=
\frac{(R + B)n}{2(R + B)}
= \frac{n}{2}.
\end{align*}
\end{proof}

The other lemmas and 
theorems from
Section~\ref{sec:bichromatic}
have their counterparts for
monochromatic matchings which
can be always obtained by
changing the words ``separated
matching'' to the words
``separated monochromatic
matching'' in the statement. We
briefly describe the proof idea
for each of these new lemmas.

\begin{itemize}
\item
In the proof of the counterpart
of Theorem~\ref{thm:manyruns},
$M_i'$ is the separated
monochromatic submatching of
$M_i$ consisting of the
monochromatic edges of $M_i$.
Again, the average size of
$M_i'$ can be increased
to $n/2 + \Omega(t^2/n)$ by
adding appropriate
(monochromatic) edges.

\item
For the proof of the counterpart
of Lemma~\ref{lem:smallchunks},
we proceed similarly as in the
proof of 
Lemma~\ref{lem:smallchunks}.
Instead of
Theorem~\ref{thm:manyruns}, we
apply its counterpart.

\item Assumptions in the
counterpart of
Lemma~\ref{lem:largechunks}
imply that there is a chunk $D$
where, say, the number of red
points
exceeds the number of blue
points by a linear additive term. It is
then easy to find the required
large separated monochromatic
matching by matching (almost)
all red points in $D$ and
(almost) all those points in the
complement of $D$ which have the
color which is more frequent in
the complement of $D$.

\item The counterpart of
Lemma~\ref{lem:smallchange} does
not differ from
Lemma~\ref{lem:smallchange},
thus it is already proved.

\item In the proof of the
counterpart of Lemma~\ref{lem:largeindexvariance} 
we proceed similarly as in
Lemma~\ref{lem:largeindexvariance}.

\item In the proof of the
counterpart of
Lemma~\ref{lem:uniformmiddlechunks}, we take a random chunk
matching
of $\Gamma$. However, when
matching a red 
chunk $C$ and a blue chunk $D$,
we consider the following two
separated matchings between
$C$ and $D$: (a) match all blue
points in $C_1$ and $C_2$ to 
blue points in $D_1$ and all red
points in $D_2$ and $D_3$
to red points in $C_3$; and (b) 
match all red points in $D_1$
and $D_2$ to 
red points in $C_1$ and all blue
points in $C_2$ and $C_3$
to blue points in $D_3$.
In the rest of the proof,
we proceed similarly as in the
proof of Lemma~\ref{lem:uniformmiddlechunks}.

\item In the proof of the
counterpart of
Lemma~\ref{lem:smallmiddlechunks}, 
we partition each $k$-chunk into three $(k/3)$-chunks.
Similarly as in the proof of Lemma~\ref{lem:smallmiddlechunks},
a disbalance of the indices of the $(k/3)$-chunks leads
to the desired lower bound.
\end{itemize}

Since all the lemmas in
Section~\ref{sec:bichromatic}
have 
counterparts for separated
monochromatic matchings, we can
derive the following monochromatic
counterpart of
Theorem~\ref{thm:puttingittogether}.

\begin{theorem}
\label{thm:monochromaticbalanced}
The are constants $\eps > 0$ and 
$n_0 \in \N$  such that
any set $P$ of $2n \geq n_0$ points
in convex position, $n$ red and $n$ blue,
admits
a separated monochromatic matching with at
least $n/2 + \eps n$ edges.
\end{theorem}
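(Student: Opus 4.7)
The plan is to mirror the proof of Theorem~\ref{thm:puttingittogether} step by step, replacing each ingredient by its monochromatic counterpart as already sketched in the bullet list immediately preceding the statement. The overall architecture is therefore: reduce from general $P$ to a $k_*$-configuration of controlled (small but not too small) index, and then split into two cases according to how the points of the minority color distribute among the middle subchunks of the majority color.

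Concretely, I would begin by disposing of the case where $P$ has at most two runs, since then we trivially get a monochromatic matching of size $n$. So assume $P$ has at least four runs. Next, I would fix the same constants as in the bichromatic proof: set $\eps$ to match the monochromatic counterpart of Lemma~\ref{lem:smallmiddlechunks}, pick $k_1$ as a large constant and $k_2 = \Theta(\eps^3 n)$, both divisible by $2$ (now we need divisibility by $2$ rather than by $3$, because chunks will be split into two halves in the monochromatic version of Lemma~\ref{lem:smallmiddlechunks}). Using the monochromatic counterparts of Lemma~\ref{lem:smallchunks} and Lemma~\ref{lem:largechunks}, we may assume that the $(k_1,0)$-partition of $P$ has index $< 0.1$ and the $(k_2,0)$-partition has index $> 0.1$. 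The counterpart of Lemma~\ref{lem:smallchange}, which as noted does not change, then lets us interpolate along the sequence of $(k,\lambda)$-partitions and find $(k_*, \lambda_*)$ whose partition $\Gamma_3$ has index in $[0.1, 0.101]$, with $k_*$ in the right window and divisible by $2$.

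From $\Gamma_3$ I would produce, exactly as in Claim~\ref{clm:Gamma4}, a rearranged point set $P_2$ with the same color counts, differing from $P$ in only $O(\eps^3 n)$ points, that admits an honest $k_*$-configuration $\Gamma_4$ of index in $[0.097, 0.103]$. Now apply the monochromatic counterpart of Lemma~\ref{lem:largeindexvariance}: either $\Gamma_4$ contains $\Omega(\eps n/k_*)$ chunks of index $\geq 0.22$ (and we are done, up to discarding the $O(\eps^3 n)$ rearranged points), or we may rearrange again as in Claim~\ref{clm:Gamma5} to a point set $P_3$ and a $k_*$-configuration $\Gamma_5$ in which every chunk has index $< 0.3$ and the overall index is still at least $0.096$. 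Finally, we dichotomize on the total number of min-index-color points lying in middle subchunks of max-index-color chunks: if this number exceeds $10^{-4} n$, the monochromatic counterpart of Lemma~\ref{lem:uniformmiddlechunks} gives the bound, and otherwise the monochromatic counterpart of Lemma~\ref{lem:smallmiddlechunks} does. In either case we obtain a separated monochromatic matching of size at least $(1/2 + \Omega(\eps^2))n$ in $P_3$; deleting the $O(\eps n)$ edges incident to the moved points transfers the bound back to $P$.

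The main obstacle, and the only place where some care is genuinely required, is checking that the monochromatic counterparts of Lemma~\ref{lem:uniformmiddlechunks} and Lemma~\ref{lem:smallmiddlechunks} really work with the rerouted matching rule described in the bullet list — namely that when a red chunk $C$ is paired with a blue chunk $D$, we form a cross-matching using the blue points of $C$ together with blue points of $D$ on one side, and red points of $D$ together with red points of $C$ on the other side. One has to re-derive the inequality analogous to \eqref{equ:CDlower}, keeping track of $b(C_M)$ and $r(D_M)$ (instead of $b(C_M)$ and $b(D_M)$), and then average against \eqref{equ:randomchunkmatching-monochr} in place of \eqref{equ:randomchunkmatching}. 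The calculations in Lemma~\ref{lem:randomchunkmatch-monochromatic} (already proved above) guarantee that the baseline $n/2$ survives this substitution, and since the cross-matching argument only uses that middle subchunks of the majority color contain many off-color points, the same $\delta$-vs-$\eps$ bookkeeping as in the bichromatic case goes through. Everything else — the reduction lemmas, the run-count argument, the interpolation step, and the two rearrangement claims — is, as the authors indicate, essentially unchanged.
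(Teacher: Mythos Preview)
Your outline follows essentially the same route as the paper: the paper's proof of Theorem~\ref{thm:monochromaticbalanced} is nothing more than the sentence ``combine the monochromatic counterparts of all the lemmas exactly as in Theorem~\ref{thm:puttingittogether}'', and that is precisely what you describe.

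One small inaccuracy: you say divisibility by $2$ \emph{replaces} divisibility by $3$ because chunks are split into halves in the monochromatic version of Lemma~\ref{lem:smallmiddlechunks}. That is not how the paper proceeds. The evenness requirement arises only in Lemma~\ref{lem:randomchunkmatch-monochromatic}, where a chunk matched to itself is split into two halves; the monochromatic counterparts of Lemmas~\ref{lem:uniformmiddlechunks} and~\ref{lem:smallmiddlechunks} still partition each $k$-chunk into \emph{three} $(k/3)$-subchunks (the cross-matching in the bullet list uses $C_1,C_2,C_3$ and $D_1,D_2,D_3$). So you in fact need $k_*$ divisible by $6$, and the interpolation step should be adjusted accordingly. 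This is a cosmetic fix, not a structural one.
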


\subsection{The general case}

We derive Theorem~\ref{thm:main-monochromatic}
from Theorem~\ref{thm:monochromaticbalanced}.
Suppose that $P$ contains $r$ red points and
$b$ blue points, i.e.,  $n = r + b$.
If $r = b$, we are done by
Theorem~\ref{thm:monochromaticbalanced}.
Thus, assume (without loss of generality) that
$r > b$. We distinguish two cases. For this,
let $\eps \leq 1/2$ and $n_0$ be the constants from Theorem~\ref{thm:monochromaticbalanced}.
We assume that $n \geq \max\{n_0/(1-\eps^2), 4/\eps^2\}$.

First, suppose that that that $r \leq b + \eps^2 n$.
We delete $r - b \leq \eps^2 n$ points from
$P$ to obtain a balanced set $P'$ with
\[
n' \geq (1 - \eps^2)n \geq n_0
\]
points. By Theorem~\ref{thm:monochromaticbalanced},
we get a monochromatic separated matching  $M$
on at least
\[
\frac{n'}{2} + \eps n'
\geq \frac{(1 - \eps^2)n}{2} + 
\eps(1 - \eps^2) n
=  \frac{n}{2} + (\eps - \eps^2/2 - \eps^3) n
\geq \frac{n}{2} + \frac{\eps n}{2},
\]
vertices.\footnote{Note the subtlety
that the here we express the size of the matching
in the number
of \emph{vertices}, while Theorem~\ref{thm:monochromaticbalanced} talks about
the number of \emph{edges}. This is compensated
by the fact that Theorem~\ref{thm:monochromaticbalanced} is
applied with $n = n'/2$.}
Clearly, $M$ is also a monochromatic
separated matching for $P$.

Second, suppose that $r > b + \eps^2 n$.
By greedily pairing the red points, we obtain
a monochromatic separated matching on
\[
2\left\lfloor\frac{r}{2} \right\rfloor
\geq r - 1 = \frac{r + b}{2} +
\frac{r - b}{2} - 1
\geq \frac{n}{2} +
 \frac{\eps^2n}{2} - 1
 = \frac{n}{2} - \frac{\eps^2n - 2}{2}
 \geq \frac{n}{2} + \frac{\eps^2 n}{4},
 \]
vertices, since $n \geq 4/\eps^2$ and hence $2 \leq \eps^2n/2$.

\section{Conclusion}

We have obtained the first improvement over the
simple lower bound bound on the size of a separated
monochromatic or bichromatic matching of an additive
term that is $\Omega(n)$. However, our result is only meaningful
in a qualitative sense, giving constants that very small.
We have made no effort to optimize the constants in our
proof, favoring simplicity. It may be worthwhile
to find out how far our approach can be pushed.

\bibliography{references}

\end{document}